\documentclass[a4paper,11pt]{article}
\usepackage{jheppub} % for details on the use of the package, please see the JINST-author-manual
\usepackage{cleveref}
\usepackage[T1]{fontenc}
\usepackage[utf8]{inputenc} % inutile avec lualatex/xelatex, conservé pour pdflatex
\usepackage[english]{babel}
\usepackage{microtype}
\usepackage{amsmath,amssymb,amsfonts,amsthm,mathtools}
\usepackage{mathrsfs}
\usepackage{bm}

\usepackage{graphicx}
\usepackage{xcolor}
\usepackage{fancyhdr}
\usepackage{setspace,bbm}
\usepackage[numbers,sort&compress]{natbib}

\theoremstyle{plain}
\newtheorem{theorem}{Theorem}[section]

\theoremstyle{definition}

\theoremstyle{remark}
\newtheorem{remark}[theorem]{Remark}

\numberwithin{equation}{section}

\newcommand{\bR}{\mathbb{R}}
\newcommand{\bC}{\mathbb{C}}

\newcommand{\al}{\alpha}
\newcommand{\ti}{\tilde}
\newcommand{\one}{\mathbbm{1}}

\newcommand{\D}{\mathcal{D}}
\newcommand{\be}{\begin{equation}}
\newcommand{\ee}{\end{equation}}
\newcommand{\bre}{\begin{remark}}
\newcommand{\ere}{\end{remark}}
 \newcommand{\ga}{\gamma}
\newcommand{\jp}{\frac{1}{2}}
\newcommand{\bz}{{\bar z}}
\newcommand{\js}{\frac{1}{4}}
\newcommand{\lm}{\lambda}

\newcommand{\ups}{\Upsilon}
\newcommand{\bpm}{\begin{pmatrix}}
\newcommand{\epm}{\end{pmatrix}}

\newcommand{\E}{\mathcal{E}}

\newcommand{\G}{\mathcal{G}}

\newcommand{\K}{\mathcal{K}}

\newcommand{\M}{\mathcal{M}}

\newcommand{\cP}{\mathcal{P}}
\newcommand{\cS}{\mathcal{S}}
\newcommand{\si}{\sigma}

\newcommand{\tr}{\mathrm{tr}}
\newcommand{\Ad}{\mathrm{Ad}}
\newcommand{\ad}{\mathrm{ad}}

\newcommand{\ri}{\mathrm{i}}

\newcommand{\pa}{\partial}
\newcommand{\xp}{\partial_+}
\newcommand{\xm}{\partial_-}
\newcommand{\xz}{\partial_z}
\newcommand{\xbz}{\partial_\bz}

\arxivnumber{2603.21355} % if you have one

\title{\boldmath Euclidean $\mathcal{E}$-models}

\date{}

\author{Ctirad Klim\v{c}\'{\i}k}
\affiliation{Institut de Math\'ematiques de Marseille, UMR 7373\\ Aix Marseille Université, CNRS,\\
13453 Marseille, France}

\emailAdd{ctirad.klimcik@univ-amu.fr}

\abstract{ 
We develop the theory of the first-order dynamical systems, called the Euclidean $\E$-models, which  naturally give rise in their second-order formulation to
non-unitary nonlinear $\sigma$-models with real Euclidean actions. We
 establish the Euclidean version
of Poisson--Lie T-duality, formulate sufficient conditions for Lax
integrability, and describe the one-loop renormalization flow of the
Euclidean $\mathcal E$-operator.

For perfect Drinfeld doubles, we introduce the $\mathcal E$-Wick
rotation, which canonically associates a Euclidean $\mathcal E$-model
with every Lorentzian one. In the families studied here, this
construction induces natural analytic continuations relating the
Lorentzian and Euclidean Lax representations and renormalization-group
flows, while preserving Poisson--Lie duality.

As the principal example, we construct the Euclidean bi-Yang--Baxter
model on the Lu--Weinstein double and determine explicitly its action,
Lax representation, Poisson--Lie dual and one-loop renormalization-group
flow. We identify a one-parameter family of one-loop RG fixed points for which
the target-space geometry develops singularities on the maximal torus of
$K$, whereas the dual model with target $K^{\mathbb C}/K$ has an
everywhere regular background. This regular dual model turns  out to be a one-parameter deformation of the non-unitary 
hyperbolic Wess--Zumino--Witten model.
}

\begin{document}
\maketitle
\flushbottom
% -----------------------------------------------------------------
% Préambule modernisé (LaTeX2e / 2025)
% -----------------------------------------------------------------

 %%%%%%%%%%%%%%%%%%%%%%%%%%%%%%%%%%%%%%%%%%%%%%%%%%%%%%%%%%%%%%
 \section{Introduction}
 %%%%%%%%%%%%%%%%%%%%%%%%%%%%%%%%%%%%%%%%%%%%%%%%%%%%%%%%%%%%%%
\label{sec:intro}

The $\mathcal{E}$-models are first-order dynamical systems introduced in
\cite{KS96a,KS96b,K15} to encode the Hamiltonian dynamics of nonlinear
$\sigma$-models related by Poisson--Lie T-duality. Their scope has since
expanded considerably: besides providing a natural language for
Poisson--Lie duality, they have proved useful in the study of integrable
$\sigma$-models \cite{S17,LV21} and in formulations of double field theory
\cite{DHT19}. In the standard setting, the operator $\E$ acting on the Lie
algebra of the Drinfeld double is self-adjoint, satisfies $\E^2=\one$, and
determines a second-order $\sigma$-model with a real Lorentzian action.

\medskip

There is, however, another natural algebraic possibility: one may consider
self-adjoint operators $\E$ satisfying $\E^2=-\one$. Such operators already
appeared in \cite{K02}, where they arose in connection with non-unitary
$\sigma$-models possessing real Euclidean actions. In that work, however,
the Euclidean theory remained a secondary development, since non-unitary
models were not the principal object of investigation. In the present paper,
we study this second possibility systematically. We refer to the resulting
first-order systems as \emph{Euclidean $\E$-models}, reserving the term
\emph{Lorentzian $\E$-models} for the usual case $\E^2=\one$.

\medskip

The distinction between the two cases is physically significant. The
$\sigma$-models associated with Lorentzian $\E$-models are unitary in the
sense that their actions are real on a Lorentzian world-sheet. After a
standard Wick rotation, however, their Euclidean actions are generally
complex, since the antisymmetric tensor term acquires a factor of the
imaginary unit. Euclidean $\E$-models, by contrast, directly produce
nonlinear $\sigma$-models with real Euclidean actions. Their inverse Wick
rotations are generically complex and hence non-unitary from the Lorentzian
point of view. A basic and historically important example is the hyperbolic
Wess--Zumino--Witten model with target $K^{\bC}/K$ \cite{GK89}.

\medskip

The study of such theories has acquired new relevance through the recent
work \cite{GKR25}, which developed a powerful probabilistic framework for
the quantization of hyperbolic Wess--Zumino--Witten models by means of the
Euclidean path integral.These new methods may well prove applicable more broadly to non-unitary
$\sigma$-models with real Euclidean actions. This provides an important
motivation for the present work. In particular, one of our principal results
is the construction of a one-parameter family of conformal deformations of
the hyperbolic Wess--Zumino--Witten model which continue to possess real
Euclidean actions and retain both integrability and Poisson--Lie duality.
These theories may therefore provide useful laboratories for investigating,
within a genuinely Euclidean path-integral framework, how integrability and
duality manifest themselves at the quantum level.

\medskip

Our aim is to develop an intrinsic Euclidean counterpart of the modern
$\E$-model formalism and to determine which of its principal Lorentzian
structures survive when the relation $\E^2=\one$ is replaced by
$\E^2=-\one$. We derive the corresponding first- and second-order dynamics,
formulate Euclidean Poisson--Lie T-duality, establish sufficient conditions
for Lax integrability, and describe the one-loop renormalization of the
Euclidean $\E$-operator. Many familiar Lorentzian constructions persist, but
in modified form, with the essential differences ultimately originating in
the change of sign in the square of $\E$.

\medskip

In developing this formalism, we make the Lorentzian and Euclidean theories
resemble one another as closely as possible, so that both their analogies and
their genuine differences remain transparent. For perfect Drinfeld doubles, we show that
every Lorentzian $\E$-model admits a canonically associated Euclidean partner
through what we call the $\E$-Wick rotation. We do not, however, have a
general argument showing that this construction automatically transports
solutions of the sufficient integrability conditions or the beta functions
of strictly renormalizable families. In particular, it is not known whether
a Lax representation of an arbitrary integrable Lorentzian $\sigma$-model
necessarily induces a Lax representation of its Euclidean partner.

Nevertheless, in all the examples considered in this work, the corresponding $\E$-Wick rotated
Lorentzian and Euclidean $\sigma$-models are related by natural analytic
continuations that carry both the Lax representation and the
renormalization-group flow from one theory to the other. Such continuations
could, in principle, have been discovered directly at the level of the
$\sigma$-models, without introducing Euclidean $\E$-models. The $\E$-model
formalism, however, derives them from first principles and thereby eliminates
the guesswork involved in identifying the appropriate continuation.
Moreover, it guarantees that analytic continuations of mutually
Poisson--Lie dual Lorentzian models remain mutually Poisson--Lie dual on the
Euclidean side.

\medskip

The Euclidean bi-Yang--Baxter model provides the principal illustration of
these ideas. We construct it as the $\E$-Wick rotation of the standard
Lorentzian bi-Yang--Baxter model and determine explicitly its first-order
$\E$-operator, second-order action, Lax representation, Poisson--Lie dual
and one-loop renormalization-group flow. We study the fixed points of this flow and we find a one-parameter family of them such that 
the Poisson--Lie dual $\sigma$-model on $K^{\bC}/K$ defines a one-parameter family of
deformations of the hyperbolic Wess--Zumino--Witten model. This family is thus one-loop conformal   modulo target-space diffeomorphisms and $B$-field gauge transformations. Whether this conformal property persists beyond one
loop remains an open question.

\medskip

This example also reveals an important global feature of Euclidean
Poisson--Lie duality. For the fixed points in question, the compact-target
Euclidean bi-Yang--Baxter description becomes singular because the operator
defining its second-order background ceases to be invertible. Its
Poisson--Lie dual at the same fixed points, however, remains globally regular. It means that it 
is globally smooth and its target-space metric is nondegenerate and
definite.  Thus the same Euclidean family that is represented
by a singular $\sigma$-model in the compact frame admits a
smooth description in the non-compact dual frame. This phenomenon is
naturally visible in the Euclidean $\E$-model formalism but is not apparent
from the analytic continuation of the compact-target $\sigma$-model alone.

\medskip

The paper is organized as follows. In Section~\ref{sec:Emod}, we introduce
Euclidean $\E$-models in the first-order formalism and derive the associated
second-order nonlinear $\sigma$-models. We formulate Euclidean Poisson--Lie
T-duality and make it explicit for perfect Drinfeld doubles, where the dual
targets can be identified with mutually dual Poisson--Lie groups. In
Section~\ref{sec:EWick}, we define the $\E$-Wick rotation, construct the
Euclidean partner of a Lorentzian $\E$-model, and determine the induced
transformation of the target-space data.

Sections~\ref{sec:integr} and~\ref{sec:renorm} are devoted respectively to
integrability and renormalization. We first formulate a Euclidean analogue
of the standard sufficient condition for the integrability of Lorentzian
$\E$-models and construct the corresponding Euclidean Lax pair. We then
derive the one-loop renormalization flow of the Euclidean $\E$-operator and
compare it with its Lorentzian counterpart. Although we do not establish a
general theorem, we show that, for a particular parameter-dependent ansatz
and under suitable analyticity assumptions, the $\E$-Wick rotation respects
both Lax integrability and one-loop renormalizability: the Euclidean Lax
operators and beta functions are obtained by analytic continuation from
their Lorentzian counterparts, accompanied, where necessary, by a
reparametrization of the spectral parameter.

Finally, in Section~\ref{sec:biYB}, we apply the general formalism to the
Lu--Weinstein double. We construct and analyze the Euclidean
bi-Yang--Baxter model and its Poisson--Lie dual, establish their integrability,
and determine their renormalization-group flow. By studying the fixed points
of this flow, we obtain the one-parameter family of one-loop conformal
deformations of the hyperbolic Wess--Zumino--Witten model described above.

%%%%%%%%%%%%%%%%%%%%%%%%%%%%%%%%%%%%%%%%%%%%%%%%
\section{Lorentzian  and Euclidean \texorpdfstring{$\E$}--models}
%%%%%%%%%%%%%%%%%%%%%%%%%%%%%%%%%%%%%%%%%%%%%%%%% 
\label{sec:Emod}
\subsection{First-order formalism}
\label{sec:2.1}
 We first recall that an $\E$-model, whether Lorentzian or Euclidean, can be viewed as a first-order dynamical system, meaning that its action in some local Darboux coordinates $p_j,q_j$ on the phase space can be written as
 \be {\cal S}=\int dt \left(\sum_jp_j\dot q_j-H(p,q)\right).\label{158}\ee
 Actually, in the case of the $\E$-models, we need not resort to the local coordinates, since there exist natural global coordinates $\ell(\si)\in LD$ on the phase space, where $LD$ stands for the loop group of the Drinfeld double $D$.  Recall   that an even-dimensional Lie group $D$ is called the Drinfeld double if its  Lie algebra $\D$  is equipped with a non-degenerate, maximally indefinite, invariant symmetric bilinear form $(.,.)_\D$. The action \eqref{158} of the $\E$-model then reads\footnote{ 
 Note that the quantity $\int d\si \biggl(d \ell\ell^{-1},[ \pa_\si\ell\ell^{-1},d \ell\ell^{-1}]\biggr)_\D$ can be interpreted as a   $2$-form on the loop group $LD$, therefore the expression $d^{-1}\int d\si \biggl(d \ell\ell^{-1},[ \pa_\si\ell\ell^{-1},d \ell\ell^{-1}]\biggr)_\D$  is to be understood as a $1$-form $f(t)dt$ for some function $f(t)$ therefore the
    integral $\int d^{-1} ...$ in \eqref{120} is effectively over the time variable.   
 }
\begin{multline}
{\cal S}_{\E}(\ell)=
\jp \int dt d\si\,
(\pa_t\ell\ell^{-1},\pa_\si\ell\ell^{-1})_\D
+\js\int d^{-1}\int d\si\,
(d\ell\,\ell^{-1},[\pa_\si\ell\ell^{-1},d\ell\,\ell^{-1}])_\D
\\
-\jp \int dt d\si\,
(\pa_\si\ell\ell^{-1},\E \pa_\si\ell\ell^{-1})_\D,\label{120}
\end{multline}
where $\E:\D\to\D$ is an $\bR$-linear operator squaring either to plus or to minus identity  $\E^2=\pm\one$ and  such 
that the bilinear form $(.,\E.)_\D$ is symmetric. Here the sign plus refers to the Lorentzian 
 $\E$-model while the sign minus to the Euclidean one. 

 \medskip

 It is worth noting that for every fixed element $f\in D$, the operator
 \be \E_f:=\Ad_{f^{-1}}\E\Ad_f,\label{225}\ee
 defines essentially the same dynamical system up to a simple field redefinition  $\ell\to f\ell$. Indeed, we easily find
 \be \cS_\E(f\ell)=\cS_{\E_f}(\ell).\label{227}\ee
 In the language of the double field theory, this field redefinition is referred to as a "generalized diffeomorphism".

 \medskip

 In the Lorentzian case, it is moreover required
 that  the bilinear form $(.,\E.)_\D$ is strictly positive definite. This assumption
 guarantees the positivity of the Hamiltonian 
\be H_\E=\jp \int d\si\,
(\pa_\si\ell\ell^{-1},\E \pa_\si\ell\ell^{-1})_\D.\label{184}\ee
 However, in the Euclidean case $\E^2=-\one$, we show in the Theorem A.1 of the Appendix that  the quadratic  form $(.,\E.)_\D$ has always the split signature     $(\underbrace{1,\dots,1}_{n},
\underbrace{-1,\dots,-1}_{n})$. In particular, the Hamiltonian $H_\E$ is not bounded from below which indicates that the question of what  the quantization
of the Euclidean $\E$-model means if the time $t$ is considered to be the physical one is more subtle compared to the Lorentzian case.  

\medskip

The equations of motion of the $\E$-model  read 
\be \pa_t\ell\ell^{-1}-\E\pa_\si\ell\ell^{-1}=0,\label{198}\ee 
or, alternatively,
 \be \pa_t j-\pa_\si(\E j)+[j,\E j]=0, \quad j=\pa_\si\ell\ell^{-1}.\label{200}\ee

 In the Lorentzian case $\E^2=\one$, it is convenient to set $\pa_\pm=\pa_t\pm\pa_\si$ and $j_\pm= (\E\pm \one)j$ and rewrite the   equation of motion \eqref{198}, \eqref{200} as
  \be \pa_\pm\ell\ell^{-1}=j_\pm,\ee
  or as
  \be \pa_+j_--\pa_-j_+-[j_+,j_-]=0.\label{205}\ee

 \medskip

On the other hand, in the Euclidean case  $\E^2=-\one$, we set  $$\pa_{z}=\pa_t+\ri\pa_\si, \quad \pa_{\bz}=\pa_t-\ri\pa_\si, \quad j_z= (\E+ \ri\one )j, \quad j_\bz=(\E-\ri\one )j$$ and rewrite the  equations of motion \eqref{198}, \eqref{200} as
\be \pa_z\ell\ell^{-1}=j_z,\quad \pa_\bz \ell\ell^{-1}=j_\bz,  \ee
\be  \pa_z j_\bz-\pa_\bz j_z-[j_z,j_\bz]=0.\label{211}\ee
 
\begin{remark}\label{1} One can formally pass from the Lorentzian equations of motion to the Euclidean ones by the analytic continuation
$
\E \to -\ri \E
$
combined with the standard Wick rotation
$
\si \to -\ri\si
$
of the spatial coordinate, under which $\partial_\pm$ are sent to $\partial_z$ and $\partial_{\bz}$, respectively. Of course, this is not an allowed construction if one starts from an explicit real Lorentzian $\E$-operator, since the resulting operator $-\ri\E$ is no longer an endomorphism of the real Drinfeld double. Nevertheless, this formal replacement explains the close resemblance between the Lorentzian and Euclidean formulae and corroborates the naturalness of the Euclidean formalism. As we shall see below, analogous analytic continuations also provide useful checks in the discussion of Lax integrability and of the renormalization flow, whenever the relevant formulae depend analytically on the parameters involved.
\end{remark}
%%%%%%%%%%%%%%%%%%%%%%%%%%%%%%%%%%%%%%%%%%%%%
\subsection{Lorentzian and Euclidean Poisson--Lie T-duality}
 %%%%%%%%%%%%%%%%%%%%%%%%%%%%%%%%%%%%%%%%%%%%

Consider a maximally isotropic subgroup $G\subset D$, which means that the dimension of $G$ is one half of the dimension of $D$ and the restriction of the bilinear form $(.,.)_\D$ to the Lie subalgebra $\G$ vanishes.
It then turns out that  the first-order action \eqref{120} represents the first-order Hamiltonian dynamics of a certain nonlinear $\sigma$-model living on a target $D/G$. This can be seen as follows.

Consider an action $${\cal S}(m,h):={\cal S}_\E(mh), \quad m\in D, \ h\in G.$$ Clearly, the action ${\cal S}(m,h)$
is gauge invariant with respect to the gauge transformation $(m,h)\to (m g,g^{-1}h)$,
$g\in G$. Moreover, by fixing the gauge $h=1$, the action ${\cal S}(m,h)$ becomes just 
the original action ${\cal S}_\E(m)$. On the other hand, by using the Polyakov-Wiegmann formula, the action ${\cal S}(m,h)$ can be written also as 
\be {\cal S}(m,h)={\cal S}_\E(m)+\int dt  d\si \left ((\pa_t mm^{-1}-\E \pa_\si mm^{-1},\chi)_\D-\jp(\chi,\E\chi)_\D\right), \quad \chi=\Ad_m(\pa_\si hh^{-1}).\label{738}\ee
 It is convenient to write 
the fields $\pa_t mm^{-1}$ and $\pa_\si mm^{-1}$ as\footnote{In the case $\E^2=\one$,  the quantities $A,B$ are given unambiguously because, as we show in Theorem A.2 of the Appendix, the
  non-degeneracy of the bilinear form $(\Ad_m.,\E\Ad_m.)_\D$ restricted  to $\G$ 
  implies $\E(\Ad_m\G)\cap \Ad_m\G=\{0\}$.  In the case $\E^2=-\one$,  the
  non-degeneracy of the bilinear form $(\Ad_m.,\E\Ad_m.)_\D$ restricted  to $\G$ is not  automatic. However, if we assume that this bilinear form is non-degenerate then again it follows $\E(\Ad_m\G)\cap \Ad_m\G=\{0\}$ (see Theorem A.3 of   Appendix for the proof) and the quantities $A,B$ are also determined unambiguously.}
 $$\pa_t mm^{-1}=A_t+\E B_t,\quad \pa_\si mm^{-1}=A_\si+\E B_\si, \quad A_{t},A_{\si},B_{t},B_{\si}\in \Ad_m\G.$$
Now it is easy  to eliminate the Gaussian dependence of the action \eqref{738} on the variable $\chi$,   we  obtain
  \begin{multline} {\cal S}(m)= {\cal S}_\E(m)+\jp\int dt\oint d\si \left ( B_t-A_\si,\E(B_t-A_\si)\right)_\D=\\=\js\int d^{-1}\oint d\si(dm m^{-1},[\pa_\si m m^{-1},dm m^{-1}])_\D +   \jp\int dt\oint d\si \bigl( (B_t-A_\si,\E B_t)_\D+(A_t\mp B_\si,\E  B_\si)_\D\bigr)= \\ =\js\int d^{-1}\oint d\si(dm m^{-1},[\pa_\si mm^{-1},dm m^{-1}])_\D +\\+   \jp\int dt\oint d\si \Bigl(\pm(\E\cP_m\pa_t mm^{-1},\pa_t mm^{-1})_\D-(\E\cP_m\pa_\si mm^{-1},\pa_\si mm^{-1})_\D\Bigr)+
    \\+   \jp\int dt\oint d\si \Bigl(( \pa_t mm^{-1},\cP_m\pa_\si mm^{-1})_\D - ( \pa_\si mm^{-1},\cP_m\pa_t mm^{-1})_\D\Bigr).\label{217}\end{multline}
   Here $\cP_m$ is the projector characterized by its kernel and its image  
  \be \label{projector}  \mathrm{Ker}(\cP_m)=\Ad_m\G,  \quad
  \mathrm{Im}(\cP_m)=\E(\Ad_m\G) .\ee
  Of course, the  action \eqref{217}
obtained by  eliminating $\chi$ is gauge-invariant under transformations $m\to mg$, $g\in G$
which explains the interpretation that ${\cal{S}}(m)$ is the action of a $\si$-model living on the space of cosets $D/G$.

We also observe that the first and the third term on the right-hand-side of \eqref{217} are the integrals of real differential forms over the world sheet, namely of the WZW one and of the form $\jp(dm m^{-1} \stackrel{\wedge}{,} \cP_m dm m^{-1})$. Therefore, they can be together interpreted as antisymmetric tensor part of the  action of the nonlinear $\sigma$-model on the target $D/G$.
On the other hand, the second line corresponds to the metric part of the
$\sigma$-model; the plus  sign in the  notation $\pm$ obviously stands for the 
  $\sigma$-model living on the Lorentzian world-sheet, while the minus sign makes the world-sheet Euclidean.

  \medskip

 \bre {We have chosen on purpose the formalism in which the expressions for the Lorentzian and the Euclidean $\sigma$-models differ as little as possible, in this instance just by a minus sign in \eqref{217}. Note in this respect  that the action of the $D/G$ $\sigma$-model corresponding to the Lorentzian $\E$-model has never been written in the $\E$-model literature as in \eqref{217}. 
Instead, it is mostly written as 
\begin{multline} {\cal S}(m)=\js\int d^{-1}\int d\si(dm m^{-1},[\pa_\si m m^{-1},dm m^{-1}])_\D +\\ +\frac{1}{4} \int dt d\si\biggl(W^+_m\partial_+ mm^{-1}, \partial_- mm^{-1}\biggr)_\D-\frac{1}{4} \int dt d\si \biggl( \partial_+ mm^{-1},W^-_m \partial_- mm^{-1}\biggr)_\D.  \label{2nd} \end{multline}
   Here
 the operators $W_m^\pm:\D\to\D$ are  defined by
 \be W_m^\pm=(\one\pm\E)\cP_m.\label{255}\ee}\ere
 
The equations of motion in the second-order formalism read
    \be \pa_+(W^-_m \partial_- mm^{-1})-\pa_-(W^+_m\partial_+ mm^{-1})-[W^+_m\partial_+ mm^{-1},W^-_m \partial_- mm^{-1}]=0.\label{257}\ee
    This fact follows from \eqref{205} and from the equation
     \be j_\pm=(\E\pm\one)j=W^\pm_m \pa_\pm mm^{-1}.\label{259}\ee
    Note that \eqref{259} follows from the
    relation $\ell=mh$ which gives
    \be j= \pa_\si \ell\ell^{-1}=\pa_\si mm^{-1}+\chi=\cP_m\pa_\si mm^{-1}+\E\cP_m\pa_t mm^{-1}.\label{262}\ee
    Here we used also the first-order equation of motion which expresses $\chi$ in terms of $m$.

  \medskip

      It turns out that even in the Euclidean case the formula \eqref{217} can be rewritten similarly as in \eqref{2nd}, namely 
\begin{multline} {\cal S}(m) =\js\int d^{-1}\int  d\si(dm m^{-1},[\pa_\si mm^{-1},dm m^{-1}])_\D +\\ -\frac{\ri}{4} \int dt d\si\biggl(W^z_m\partial_z mm^{-1}, \partial_\bz mm^{-1}\biggr)_\D+\frac{\ri}{4} \int dt d\si \biggl( \partial_z mm^{-1},W^\bz_m \partial_\bz mm^{-1}\biggr)_\D, \label{247} \end{multline}
where
\begin{subequations}\label{286}
\begin{align}
    W^z_m \pa_z m m^{-1}&:=
 \cP_m \pa_t m m^{-1}+
\ri \cP_m \pa_\si m m^{-1} + \E\cP_m \pa_\si m m^{-1} -\ri\E\cP_m \pa_t m m^{-1},\\
W^{\bz}_m \pa_{\bz} m m^{-1}&:=  \cP_m \pa_t m m^{-1}
-\ri \cP_m \pa_\si m m^{-1} +\E\cP_m \pa_\si m m^{-1} +\ri  \E\cP_m \pa_t m m^{-1}. 
 \end{align}
 \end{subequations}
 Taking into account the $\bR$-linearity of the projectors $\cP_m$,
 we can rewrite those formulas more elegantly as
 \begin{subequations}\label{295}
\begin{align}
    W^z_m \pa_z m m^{-1}&=(\one -\ri\E)\cP_m\pa_z m m^{-1}
  \\
W^{\bz}_m \pa_{\bz} m m^{-1}&=(\one +\ri\E)\cP_m\pa_\bz m m^{-1}
 \end{align}
 \end{subequations}
 
 The Euclidean equations of motion in the second-order formalism read
\be  \pa_z (W^\bz_m \pa_\bz mm^{-1})-\pa_\bz (W^z_m \pa_z mm^{-1})-[W^z_m \pa_z mm^{-1},W^\bz_m \pa_\bz mm^{-1}]=0.\label{277}\ee
 
    This fact follows from \eqref{211} and from the equations
\begin{subequations}\label{279}
\begin{align}
j_z &= (\E + \ri\one)j 
= W^z_m \pa_z m m^{-1} \label{279a} \\
j_{\bz} &= (\E - \ri\one)j 
= W^{\bz}_m \pa_{\bz} m m^{-1} \label{279b}
\end{align}
\end{subequations}
Note that \eqref{279} follows from the
    relation $\ell=mh$ which gives
    \be j= \pa_\si \ell\ell^{-1}=\pa_\si mm^{-1}+\chi= -\E\cP_m\pa_t mm^{-1}+\cP_m\pa_\si mm^{-1}.\label{282}\ee

 We conclude by noting that the way to obtain the second-order 
$D/G$
$\sigma$-model from the first-order 
$\E$-model is the same, regardless of which maximally isotropic subgroup 
$G\subset D$
  is chosen. In particular, if $G$ and $\tilde G$
 are two maximally isotropic subgroups, then the 
$\sigma$-models on $D/G$ and $D/\tilde G$
 are dynamically equivalent. This statement is usually referred to as {\it Poisson–Lie T-duality}, and we observe that it has both a Lorentzian and a Euclidean version.

  %%%%%%%%%%%%%%%%%%%%%%%%%%%%%%%%%%%%%%%%%%%%%%%%%%%%%%%%
 \subsection{Perfect Poisson--Lie T-duality}
%%%%%%%%%%%%%%%%%%%%%%%%%%%%%%%%%%%%%%%%%%%%%%%%%%%%
\label{sec:Emod3}

It is instructive to work out the explicit actions \eqref{2nd} and \eqref{247} of the $\sigma$-models related by the 
Poisson--Lie duality in the case when $D$ is a perfect Drinfeld double, which means that there exist two maximally isotropic Lie subgroups $G,\ti G\subset D$ and the (Iwasawa) diffeomorphism $Iw:G\times \ti G\to D$ given by the group multiplication $$Iw(g,\ti g)= g\ti g, \quad g\in G, \ \ti g\in \tilde G.$$ 
It follows that the spaces of cosets $D/\ti G$ and $D/G$ can be identified with the mutually dual Poisson--Lie groups $G$ and $\ti G$.  

\medskip

 Let $S_\pm,A_\pm: \G\to \ti\G$ and $\ti S_\pm,\ti A_\pm:\ti \G\to \G$  be linear operators; $S_\pm,\ti S_\pm$ are supposed to be invertible and also symmetric,  which means that the bilinear forms $(S_\pm(.),.)_\D$ on $G$ and  $(\ti S_\pm(.),.)_\D$ on $\ti\G$ are symmetric. Similarly, 
  $A_\pm,\ti A_\pm$   are supposed to be antisymmetric but they need not be invertible. 
  We can then  construct linear operators  $\E_\pm:\D\to\D$  defined as
 \be \E_\pm \bpm \al \\ a\epm=\bpm \pm\ti S_\pm^{-1}\ti A_\pm&\ti S_\pm^{-1}\\  \mp S_\pm^{-1}&\pm S_\pm^{-1}A_\pm\epm\bpm \al  \\a \epm,\quad  \al\in\ti\G,\ a\in\G,\label{393}\ee
where  the block matrix notation refers to the direct sum decomposition (in the sense of the vector spaces) of the Lie algebra $\D$ of the perfect double $D$ 
\be \D=\ti\G\oplus  \G.\label{377}\ee 
We require that $S_\pm,A_\pm,\ti S_\pm,\ti A_\pm$ be such that the operators $\E_+$ and $\E_-$ define respectively the Euclidean and the Lorentzian $\E$-models,  that is
\be \E_\pm^2=\mp\one_\D,  \quad (\E_\pm x,y)_\D=(x,\E_\pm y)_\D, \quad x,y\in \D.\label{386}\ee 
This holds whenever
\be \ti S_\pm S_\pm\mp\ti A_\pm A_\pm=\one_\G, \quad \ti S_\pm A_\pm+\ti A_\pm S_\pm=0,\label{392}\ee
or, equivalently, whenever
\be S_\pm\ti S_\pm \mp A_\pm\ti A_\pm=\one_{\ti\G}, \quad S_\pm\ti A_\pm+A_\pm \ti S_\pm=0.\label{394}\ee
Note also that the equivalent conditions \eqref{392}, \eqref{394} can be both rewritten as
\be E_\pm^{-1}=\ti E_\pm,\label{322}\ee
where
\be  E_+:=S_++\ri A_+, \quad E_-:=S_--A_-, \quad \ti E_+:=\ti S_++\ri \ti A_+, \quad \ti E_-:=\ti S_- -\ti A_-.\ee
It is easy to solve the condition \eqref{322} in terms of $S_\pm,A_\pm$ 
 \be \E_\pm \bpm \al \\ a\epm=\bpm \mp A_\pm S_\pm^{-1}&S_\pm\pm A_\pm S_\pm^{-1}A_\pm\\  \mp S_\pm^{-1}&\pm S_\pm^{-1}A_\pm\epm\bpm \al  \\a \epm,\quad  \al\in\ti\G,\ a\in\G,\label{335a}\ee
 or in terms of $\ti S_\pm, \ti A_\pm$
  \be \E_\pm \bpm \al \\ a\epm=\bpm \pm\ti S_\pm^{-1}\ti A_\pm&\ti S_\pm^{-1}\\ \mp \ti S_\pm -\ti A_\pm \ti S_\pm^{-1}\ti A_\pm&\mp \ti A_\pm\ti S_\pm^{-1}\epm\bpm \al  \\a \epm,\quad  \al\in\ti\G,\ a\in\G.\label{335b}\ee
It turns out  that in the Lorentzian case all  self-adjoint operators $\E_-$ verifying $\E_-^2=\one$ and yielding positive definite the bilinear form $(.,\E_- .)_\D$ are of the form  \eqref{393} (see Theorem A.4 of the Appendix). The situation is however different in the Euclidean case where the manifold consisting of self-adjoint operators $\E_+$ verifying $\E_+^2=-\one$  is not entirely filled by the operators $\E_+$ of the form \eqref{393}. Nevertheless the operators $\E_+$ of the form \eqref{393}  form a dense open subset in that manifold. Said in other words, a generic Euclidean $\E$-model does have the $\E_+$-operator of the form \eqref{393}.

\medskip

Let us work out the action \eqref{2nd} on the group $G=D/\ti G$, in which case we can set a gauge $m\in G$. We have to find the explicit form of the projector $\cP^\pm_m$   characterized by its kernel $\Ad_m\ti\G$ and its image   $ \E_\pm(\Ad_m\ti\G)$.
For that, we can represent every element $x\in \D$ as
\be x=\Ad_m(\al+ a),\quad  a\in \G,\ \al\in\ti\G,\label{331}\ee
and then we decompose $x$
as \be x=\Ad_m\beta^\pm+\E_\pm(\Ad_m\ga^\pm).\label{333}\ee
Comparing \eqref{331} and \eqref{333}, we find
\be \al+a=\beta^\pm+\E_{\pm, m}\gamma^\pm,\label{335}\ee
where
\be \E_{\pm, m}:=\Ad_{m^{-1}}\E_\pm\Ad_m.\label{337}\ee
We note that the operators $\E_{\pm, m}$  verify the conditions \eqref{386}, therefore there exist generically  operators $S_{\pm,m},A_{\pm,m},\ti S_{\pm,m},\ti A_{\pm,m}$ such that the following holds  
 \be \E_{\pm, m}\bpm \al \\ a\epm=\bpm \pm \ti S_{\pm,m}^{-1}\ti A_{\pm,m}&\ti S_{\pm,m}^{-1}\\  \mp S_{\pm,m}^{-1}&\pm S_{\pm,m}^{-1}A_{\pm,m}\epm\bpm \al  \\a \epm,\quad  \al\in\ti\G,\ a\in\G.\label{393b}\ee
 The equation \eqref{335} can be therefore rewritten as
  \be   \bpm \al-\beta^\pm \\ a\epm=\bpm \pm \ti S_{\pm,m}^{-1}\ti A_{\pm,m}&\ti S_{\pm,m}^{-1}\\  \mp S_{\pm,m}^{-1}&\pm S_{\pm,m}^{-1}A_{\pm,m}\epm\bpm \gamma^\pm  \\0 \epm, \label{393c}\ee
  which gives, in particular
  \be \gamma^\pm=\mp S_{\pm,m} a.\ee
 We infer from it and from the identity
   $$\ti S_{\pm,m}A_{\pm,m}+\ti A_{\pm,m}S_{\pm,m}=0$$ that we have 
  \begin{multline} \cP_{\pm,m}(\pa_tmm^{-1})=\cP_{\pm,m}(\Ad_m(m^{-1}\pa_t m))=\mp\Ad_m\E_{\pm,m} S_{\pm,m}(m^{-1}\pa_t m)=\\=\Ad_m (A_{\pm,m}(m^{-1}\pa_t m))+\pa_t mm^{-1}.\label{347}\end{multline}
 It follows that
    \be  \E_\pm\cP_{\pm,m}(\pa_tmm^{-1})= \Ad_m  S_{\pm,m}(m^{-1}\pa_t m).\label{349}\ee
    
  Inserting \eqref{347} and \eqref{349} into \eqref{217}, we obtain the Lorentzian action on $D/\ti G=G$ of the form
  \be {\cal S}_-(m)=    \jp\int dt\oint d\si (m^{-1}\pa_+ m,(S_{-,m} -A_{-,m})m^{-1}\pa_- m)_\D,\label{352}\ee
  while the Euclidean action reads
    \be {\cal S}_+(m)=    \jp\int dt\oint d\si (m^{-1}\pa_z m,(S_{+,m} +\ri A_{+,m})m^{-1}\pa_\bz m)_\D.\label{353}\ee

In order to work out the action \eqref{2nd} on the dual group $\ti G=D/ G$,  it is convenient to set a gauge $\ti m\in \ti G$ and  to identify the explicit form of the projector $\ti\cP_{\pm,\ti m}$. This is done by repeating step by step the derivation for the target $G$ which essentially amounts to putting tildes on appropriate quantities like e.g. $\ti S_{\pm,\ti m}$, $\ti A_{\pm,\ti m}$. In particular, we find 
\be  \ti\cP_{\pm,\ti m}(\pa_t{\ti m}{\ti m}^{-1})=\mp \Ad_{\ti m} (\ti A_{\pm,\ti m}({\ti m}^{-1}\pa_t {\ti m}))+\pa_t {\ti m}{\ti m}^{-1},\label{359}\ee
    \be  \E_\pm\ti\cP_{\pm,\ti m}(\pa_t{\ti m}{\ti m}^{-1})= \mp\Ad_{\ti m}  \ti S_{\pm,\ti m}({\ti m}^{-1}\pa_t {\ti m}).\label{361}\ee
It follows that the Lorentzian action on $D/G=\ti G$ has the form
  \be \ti {\cal S}_-(\ti m)=    \jp\int dt\oint d\si (\ti m^{-1}\pa_+ \ti m,(\ti S_{-,\ti m} -\ti A_{-,\ti m})\ti m^{-1}\pa_- \ti m)_\D,\label{358}\ee
  while the Euclidean action reads
    \be \ti {\cal S}_+(\ti m)=   - \jp\int dt\oint d\si (\ti m^{-1}\pa_z \ti m,(\ti S_{+,\ti m} +\ri \ti A_{+,\ti m})\ti m^{-1}\pa_\bz \ti m)_\D.\label{360}\ee
    
    Of course, this is still not quite the end of the story. We have to work out the explicit dependence of the operators $S_{\pm,m}$, $A_{\pm,m}$
on $m$. To do this, we write the operator $\Ad_m:\D\to \D $ in the block form according to the decomposition \eqref{377}
\be \Ad_m^\D\bpm \al\\a\epm =\bpm a_m&0\\b_m&\Ad_m^\G\epm\bpm \al\\a\epm ,\quad  \Ad_{m^{-1}}^\D\bpm \al\\a\epm =\bpm a_m^{-1}&0\\-\Ad^\G_{m^{-1}}b_ma_m^{-1}&\Ad_{m^{-1}}^\G\epm\bpm \al\\a\epm. \label{367}\ee
Combining \eqref{393},\eqref{337},\eqref{393b}  and \eqref{367}, we find
    \be \bpm \pm \ti S_{\pm,m}^{-1}\ti A_{\pm,m}&\ti S_{\pm,m}^{-1}\\  \mp S_{\pm,m}^{-1}&\pm S_{\pm,m}^{-1}A_{\pm,m}\epm=\bpm a_m^{-1}&0\\-\Ad^\G_{m^{-1}}b_ma_m^{-1}&\Ad_{m^{-1}}^\G\epm
   \bpm \pm\ti S_\pm^{-1}\ti A_\pm&\ti S_\pm^{-1}\\  \mp S_\pm^{-1}&\pm S_\pm^{-1}A_\pm\epm
    \bpm a_m&0\\b_m&\Ad_m^\G\epm.\ee 
    It follows that 
       \be \ti S_{+,m}+\ri \ti A_{+,m}=\Ad^\G_{m^{-1}}(\ti S_+ +\ri \ti A_++\ri\Pi(m))a_m, \label{390} \ee
    \be \ti S_{-,m}- \ti A_{-,m}=\Ad^\G_{m^{-1}}(\ti S_- -\ti A_-+\Pi(m))a_m, \label{391}\ee
    where $\Pi(m)$  is the   Poisson--Lie structure on $G$ defined by 
    \be \Pi(m):=b_ma_m^{-1}.\ee

    Finally, using \eqref{322}, we find for the Lorentzian action
      \be {\cal S}_-(m)=    \jp\int dt\oint d\si (\pa_+ mm^{-1},( \ti S_--\ti A_-+\Pi(m))^{-1} \pa_- mm^{-1})_\D,\label{376}\ee
  while the Euclidean action reads
    \be {\cal S}_+(m)=    \jp\int dt\oint d\si (\pa_z mm^{-1},( \ti S_++\ri \ti A_++\ri\Pi(m))^{-1} \pa_\bz  mm^{-1})_\D.\label{378}\ee
  
  It remains  to work out the explicit dependence of the operators $\ti S_{\pm,\ti m}$, $\ti A_{\pm,\ti m}$
on $\ti m$. To do this, we write the operator $\Ad_{\ti m}:\D\to \D $ in the block form according to the decomposition \eqref{377}
\be \Ad_{\ti m}^\D\bpm \al\\a\epm =\bpm\Ad_{\ti m}^{\ti\G}&\ti b_{\ti m}\\0&\ti a_{\ti m}\epm\bpm \al\\a\epm ,\quad  \Ad_{{\ti m}^{-1}}^\D\bpm \al\\a\epm =\bpm\Ad_{\ti m^{-1}}^{\ti\G}&-\Ad_{\ti m^{-1}}^{\ti\G}\ti b_{\ti m}\ti a_{\ti m}^{-1}\\0&\ti a_{\ti m}^{-1}\epm\bpm \al\\a\epm . \label{367a}\ee
Combining \eqref{393}, \eqref{337}, \eqref{393b}  and \eqref{367a}, we find
    \be\bpm \pm \ti S_{\pm,\ti m}^{-1}\ti A_{\pm,\ti m}&\ti S_{\pm,\ti m}^{-1}\\  \mp S_{\pm,\ti m}^{-1}&\pm S_{\pm,\ti m}^{-1}A_{\pm,\ti m}\epm=\bpm\Ad_{\ti m^{-1}}^{\ti\G}&-\Ad_{\ti m^{-1}}^{\ti\G}\ti b_{\ti m}\ti a_{\ti m}^{-1}\\0&\ti a_{\ti m}^{-1}\epm
    \bpm \pm\ti S_\pm^{-1}\ti A_\pm&\ti S_\pm^{-1}\\  \mp S_\pm^{-1}&\pm S_\pm^{-1}A_\pm\epm
  \bpm\Ad_{\ti m}^{\ti\G}&\ti b_{\ti m}\\0&\ti a_{\ti m}\epm.\ee
      It follows that 
       \be S_{+,\ti m}+\ri  A_{+,\ti m}=\Ad^{\ti \G}_{{\ti m}^{-1}}( S_+ +\ri A_+-\ri\ti \Pi({\ti m}))\ti a_{\ti m}, \ee
    \be  S_{-,\ti m}- A_{-,\ti m}=\Ad^{\ti \G}_{{\ti m}^{-1}}( S_- - A_-+\ti \Pi({\ti m}))\ti a_{\ti m},  \ee
    where $\ti\Pi({\ti m})$  is the dual Poisson--Lie structure on $\ti G$ defined by \be \ti\Pi({\ti m}):=\ti b_{\ti m}\ti a_{\ti m}^{-1}.\ee
       
    Finally, using \eqref{322}, we find for the dual Lorentzian action
      \be \ti {\cal S}_-(\ti m)=    \jp\int dt\oint d\si (\pa_+ {\ti m}{\ti m}^{-1},(  S_-- A_-+\ti\Pi({\ti m}))^{-1} \pa_- {\ti m}{\ti m}^{-1})_\D,\label{397}\ee
  while the dual Euclidean action reads
    \be \ti {\cal S}_+({\ti m})=    \jp\int dt\oint d\si (\pa_z {\ti m}{\ti m}^{-1},(  -S_+-\ri  A_++\ri\ti\Pi({\ti m}))^{-1} \pa_\bz  {\ti m}{\ti m}^{-1})_\D.\label{399}\ee

    %%%%%%%%%%%%%%%%%%%%%%%%%%%%%%%%%
    \section{ \texorpdfstring{$\E$}--Wick rotation}
    %%%%%%%%%%%%%%%%%%%%%%%%%%%%%%
    \label{sec:EWick}
  Throughout, the index $+$ refers to the Euclidean case $\mathcal E_+^2=-\one$, while $-$ refers to the Lorentzian case $\mathcal E_-^2=\one$.

\medskip
  
  Consider a Lorentzian   $\E_-$-model on a Drinfeld double determined by a  self-adjoint involution 
$\E_-:\D\to\D$ such that the bilinear form $(.,\E_- .)_\D$ is strictly positive definite
(the Lorentzian case). As a consequence, the $\pm 1$-eigenspaces $V_\pm$ of $\E_-$
are $(.,.)_\D$-orthogonal and have the same dimension, equal to one half of the
dimension of the Drinfeld double.

One may ask whether one can naturally  associate a Euclidean $\E_+$-model
to a given Lorentzian $\E_-$-model. The answer is in general negative
but if the Drinfeld double is perfect such a natural association does exist. Let us describe how it works.

\medskip

The basic point of the construction of the Euclidean operator $\E_+$ out of a Lorentzian one $\E_-$ is the fact that the eigenspaces $V_\pm$ are transverse to $\ti\G$, which means that 
$\D=V_\pm\oplus \ti\G$. Indeed, non-vanishing elements $x_\pm\in V_\pm$ cannot belong to the isotropic subspace $\ti\G$ because the strict positive definiteness of the 
bilinear form $(.,\E_-.)_\D$ implies
$$(x_\pm,x_\pm)_\D=\pm (x_\pm,\E_-x_\pm)_\D\neq 0.$$
Keeping in mind the perfect decomposition 
$$\D=\ti\G\oplus\G,$$
it follows that $V_\pm$ are necessarily graphs  of some linear operators from $\ti\G\to\G$; indeed \eqref{393} gives
$$V_\pm=\{\ti x+(\ti A_-\pm \ti S_-)\ti x, \ \ti x\in\ti\G\}.$$
Note that the possibility of expressing the subspaces $V_\pm$ as graphs
gives the geometric interpretation of the algebraic results contained in Theorem A.4 of  Appendix. The mutual orthogonality of $V_+$ and $V_-$ requires that $\ti S_-:\ti\G\to\G$  and $\ti A_-:\ti\G\to\G$ present in \eqref{393} must be  symmetric  
 and  antisymmetric operators, respectively. The graph property requires  that  the operators $\ti S_-\pm \ti A_-$ be invertible
and the positive (negative) definiteness of the bilinear form $(.,.)_\D$ restricted to $V_+(V_-)$ requires 
  the bilinear form $(\ti S_- .,.)_\D$ on  $\ti \G$ must be positive definite.  

We observe that there exist natural maps $\iota_\pm: V_\pm\to V_\mp$ which are inverse to each other and which are defined by
$$\iota_\pm(\ti x+(\ti A_-\pm \ti S_-)\ti x)=\ti x+(\ti A_-\mp \ti S_-)\ti x.$$
We then define the $\E$-Wick rotated   operator $\E_+$ by
\be \E_+x_\pm=\pm \iota_\pm(x_\pm), \quad x_\pm\in V_\pm.\label{447}\ee
It is easy to check that $\E_+$ fulfils all requirements in order to define the Euclidean $\E$-model, in particular it is  self-adjoint and  it squares to $-\one$.

Let us come back to   \eqref{335b}
% \be \E_\pm \bpm \al \\ a\epm=\bpm \mp A_\pm S_\pm^{-1}&S_\pm\pm A_\pm S_\pm^{-1}A_\pm\\  \mp S_\pm^{-1}&\pm S_\pm^{-1}A_\pm\epm\bpm \al  \\a \epm,\quad  \al\in\ti\G,\ a\in\G.\label{335c}\ee
 \be \E_\pm \bpm \al \\ a\epm=\bpm \pm\ti S_\pm^{-1}\ti A_\pm&\ti S_\pm^{-1}\\ \mp \ti S_\pm -\ti A_\pm \ti S_\pm^{-1}\ti A_\pm&\mp \ti A_\pm\ti S_\pm^{-1}\epm\bpm \al  \\a \epm,\quad  \al\in\ti\G,\ a\in\G.\label{335d}\ee
 We wish to express the operators  $\ti S_+,\ti A_+$ induced by the definitions \eqref{393} and \eqref{447} in terms of the operators
  $\ti S_-,\ti A_-$. The result turns out to be very simple, the $\E$-Wick rotation  amounts to 
 %$$S_+=-S_-, \quad A_+=A_-,$$
 %or, equivalently, to
  \be \ti S_+=\ti S_-, \quad \ti A_+=-\ti A_-.\label{461}\ee
This gives an alternative way of expressing  the operator $\E_-$ as well as its $\E$-Wick rotated counterpart  $\E_+$ as 
  \be \E_\pm \bpm \al \\ a\epm =\bpm  -\ti S_-^{-1}\ti A_-& \ti S_-^{-1} \\  \mp \ti S_- -\ti A_- \ti S_-^{-1}\ti A_- &\ti A_-\ti S_-^{-1}\epm\bpm \al  \\a \epm.\label{456}\ee
  At the level of the $\sigma$-models on the target $G=D/\ti G$, 
  the Lorentzian one is given by the action \eqref{376}
    \be {\cal S}_-(m)=    \jp\int dt\oint d\si (\pa_+ mm^{-1},( \ti S_--\ti A_-+\Pi(m))^{-1} \pa_- mm^{-1})_\D,\label{466}\ee
    while the $\E$-Wick rotated Euclidean $\sigma$-model action on $G$ 
is obtained from \eqref{378} and \eqref{461}
  \be {\cal S}_+(m)=    \jp\int dt\oint d\si (\pa_z mm^{-1},( \ti S_--\ri \ti A_-+\ri\Pi(m))^{-1} \pa_\bz  mm^{-1})_\D.\label{469}\ee
  Both actions are of course real valued which shows that the 
  $\E$-Wick rotation is not the same thing as the standard Wick rotation. Indeed, the latter necessarily transforms a real action into a complex one whenever there is a nontrivial $B$-field.

  \medskip
  
  We shall illustrate the formulas \eqref{466} and \eqref{469} in Section  \ref{sec:biYB} in the context of the   Yang--Baxter deformations.
  
 \remark{The $\E$-Wick rotation can then be understood as the
analytic continuation $\E(\ti S_-, \ti A_-)
\to -\ri\E(-\ri\ti S_-, \ti A_-)$. This new operator is
still symmetric but now satisfies the Euclidean condition of squaring to minus the identity. Crucially, the structure of $\E(\ti S_-, \ti A_-)$ in \eqref{456} is such that
$-\ri\E(-\ri\ti S_-, \ti A_-)$ is still a real operator, so that the procedure maps a real
Lorentzian $\E$-model to a real Euclidean $\E$-model. In other words, this implements the formal analytic continuation $\E\to -\ri\E$ of Remark \ref{1}, together
with an additional continuation in the parameters entering $\E$ chosen such
that the resulting operator is still real.}

%%%%%%%%%%%%%%%%%%%%%%%%%%%%%%%%%%%%%%%%%%
\section{Integrability of Euclidean  \texorpdfstring{$\E$}--models}
%%%%%%%%%%%%%%%%%%%%%%%%%%%%%%%%%%%%%%%%%%
\label{sec:integr}
%%%%%%%%%%%%%%%%%%%%%%%%%%%%%%%%%%%%%%%%
\subsection{Euclidean integrability condition}
%%%%%%%%%%%%%%%%%%%%%%%%%%%%%%%%%%%%%%%%%%%

Following \cite{S17}, a Lorentzian \(\E_-\)-model is integrable if there exists a Lie algebra $\K$ and a one-parameter family of linear maps
\(O_-(\lm):\D\to\K\) such that
\begin{equation}
\label{381}
[O_-(\lm)x, O_-(\lm)\E_- x]_\K = O_-(\lm)[x,\E_- x]_\D, \qquad \forall\,x \in \D.
\end{equation}
The Lax connection is then given by
\be L_\pm(\lm)=O_-(\lm)j_\pm= O_-(\lm)\, W_m^\pm\, \pa_\pm m m^{-1}.\label{291}
\ee
This means that  for every value of the spectral parameter \(\lm\) the following  holds on shell
\[
\pa_+ L_-(\lm) - \pa_- L_+(\lm) + [L_-(\lm),L_+(\lm)]_\K = 0 .
\]

We now show that the same sufficient criterion admits a Euclidean analogue.
Namely, suppose that there exists a Lie algebra $\K$ and a one-parameter
family of linear maps \(O_+(\lm):\D\to\K\)  
 such that
\begin{equation}
\label{562}
[O_+(\lm)x, O_+(\lm)\E_+ x]_\K = O_+(\lm)[x,\E_+ x]_\D, \qquad \forall\,x \in \D.
\end{equation}
Then the Euclidean $\E_+$-model is integrable.

\medskip

The Euclidean Lax connection is expressed in terms of a solution $O_+(\lm)$ as
\be L_z(\lm)=O_+(\lm)\E_+ j+\ri O_+(\lm)j \equiv O_+(\lm)j_z , \quad  L_\bz(\lm)=O_+(\lm)\E_+ j-\ri O_+(\lm)j \equiv O_+(\lm)j_\bz. \label{512}\ee
This  means that  for every value of the spectral parameter \(\lm\) the following flatness condition holds on shell
\be 
\pa_\bz L_z(\lm) - \pa_z L_\bz(\lm) + [L_z(\lm),L_\bz(\lm)]_{\K^\bC} = 0.
\label{503}\ee

We prove \eqref{503} by combining the Euclidean equations of motion 
\eqref{200} with the condition \eqref{562}. We find 
 \begin{multline}
     \pa_\bz L_z(\lm) - \pa_z L_\bz(\lm) + [L_z(\lm),L_\bz(\lm)]_{\K^\bC} =
     \pa_\bz (O_+(\lm)\E_+ j+\ri O_+(\lm)j)\\ - \pa_z (O_+(\lm)\E_+ j-\ri O_+(\lm)j) + [O_+(\lm)\E_+ j+\ri O_+(\lm)j,O_+(\lm)\E_+ j-\ri O_+(\lm)j]_{\K^\bC} =\\=
    2\ri O_+(\lm)(\pa_t j-\pa_\si(\E_+ j)+[j,\E_+ j]). \label{508}
 \end{multline}
 We thus infer from \eqref{508} and the equations of motion \eqref{200},
 that indeed \eqref{503} holds on shell.

 \medskip

 The Lax connection in the second-order $\sigma$-model formalism is obtained by inserting  $j$ from \eqref{282} into \eqref{512}. If we take into account the $\bR$-linearity of $O_+(\lm)$, the result can be written as
  \be L_z(\lm)= O_+(\lm)W_m^z\pa_zmm^{-1}, \quad  L_\bz(\lm) =O_+(\lm)W_m^\bz\pa_\bz mm^{-1}.\label{520}\ee

  \begin{remark}  \small  The Lax connection \eqref{291} is $\K$-valued. If $\K$ is not itself a complex Lie algebra, allowing the spectral parameter to take complex values generally requires passing from $\K$ to its complexification $\K^\bC$, that is the image of the operators $O_-(\lm)$ has to be $\K^\bC$.  The Lax connection then becomes $\K^\bC$-valued and the flatness condition must be formulated with the commutator $[.,.]_{\K^\bC}$. We may nevertheless  impose a reality condition  
\be O_-(\bar\lm)x=\overline{O_-(\lm)x},  \label{642}\ee
where the  wide bar on the right-hand-side of \eqref{642} means the conjugation on $\K^\bC$ induced by the real form $\K$. In particular, if the real form $\K$ is a compact simple Lie algebra then the conjugation is (up to a sign) the standard Hermitian conjugation. 

\smallskip

The reality condition \eqref{642} guarantees that the components of the Lorentzian Lax connection verify
\be  \overline{L_\pm(\lm)}=L_\pm(\overline\lm).\label{729}\ee
In particular,  
for real $\lm$ we recover the $\K$-valuedness of the  Lax connection \eqref{291}.

\smallskip

In the Euclidean case the natural reality condition is also
given by \eqref{642}, that is 
\be O_+(\bar\lm)x=\overline{O_+(\lm)x}.  \label{803}\ee
If this condition holds the components of the Euclidean Lax connections verify  
\be \overline{L_z(\lm)}=L_{\bz}(\overline \lm).\ee
\end{remark}

%%%%%%%%%%%%%%%%%%%%%%%%%%%%%%%%%%%%%%%%%%%%%%%%
 \subsection{ \texorpdfstring{$\E$}--Wick rotation of the Lax pair}
%%%%%%%%%%%%%%%%%%%%%%%%%%%%%%%%%%%%%%%%%%%%%%%%
\label{sec:4.2}
 We may ask the following natural question: Does the integrability of a Lorentzian $\E_-$-model imply the integrability of the $\E$-Wick rotated Euclidean $\E_+$-model? 

 \smallskip

 As far as the answer to this question is concerned we have not succeeded to show in full generality that  if a Lorentzian $\E_-$-model fulfils the sufficient condition of integrability \eqref{381} for some family of operators $O_-(\lm)$ then  its $\E$-Wick rotated Euclidean counterpart automatically
satisfies the Euclidean integrability condition \eqref{562} for some related family of operators $O_+(\lm)$. However,  there are particular situations where  $O_+(\lm)$ can be constructed from the knowledge of $O_-(\lm)$. The argument is based on analyticity and
it starts with an assumption that there exists a real parameter $w$ and a one-parameter family of the integrable  Lorentzian
$\E$-models based on the
operators $\E_-(w)$ of the form (cf. \eqref{456})
  \be \E_-(w) \bpm \al \\ a\epm =\bpm  -(w\ti S_-)^{-1}\ti A_-& (w\ti S_-)^{-1} \\  + w\ti S_- -\ti A_- (w\ti S_-)^{-1}\ti A_- &\ti A_-(w\ti S_-)^{-1}\epm\bpm \al  \\a \epm.\label{689}\ee
  It is then easy to check that the one-parameter family of the $\E$-Wick rotated $\E$-models is characterized by the $\E_+(w)$ operators  of the form
  $$
\E_{+}(w)=-\ri\,\E_{-}(-\ri w).
$$
We wish to show that the Euclidean models based on $\E_+(w)$ are also integrable. 

\smallskip

We already supposed that there exists a one-parameter family of operators $O_-(\lm,w)$ verifying the integrability condition \eqref{381} for each $w$. We shall suppose also that the matrix elements of
$O_-(\lm,w)$ are analytic in $w$ in a suitable domain containing  a real interval. 
The complex analytic continuation $O_-(\lm,-\ri w)$  then satisfies 
\be [O_-(\lm,-\ri w)x,O_-(\lm,-\ri w)\E_+(w)x]_{\K^\bC}=O_-(\lm,-\ri w)[x,\E_+(w) x]_{\D^\bC}.\label{695}\ee
To show that the Euclidean  $\E_+(w)$ models  are integrable, it   is thus sufficient to choose
\be 
O_+(\lm,w):=O_-(\Lambda(\lm,w),-\ri w), \label{706}
\ee
where $\Lambda(\lm,w)$ can be an arbitrary non-constant function analytic in the domain of definition of $O_-(\lm,w)$.
Indeed, the equation 
 \eqref{695} then guarantees the on shell flatness of
 the $\K^\bC$-valued flat Lax connection
 $$ L_z(\lm)= O_+(\lm,w)j_z , \quad  L_\bz(\lm)= O_+(\lm,w)j_\bz.  $$
 
\smallskip

\begin{remark}\small 

The most obvious choice $\Lambda(\lm,w)=\lm$ may not satisfy the Euclidean reality condition \eqref{803}, and we may therefore have to choose a different parametrization of the spectral parameter. We shall see an example of this situation in Section \ref{sec:biYB}.
    
\end{remark}

%%%%%%%%%%%%%%%%%%%%%%%%%%%%%%%%%%%%%%%%%%%%%%%%%
\section{Renormalization of Euclidean  \texorpdfstring{$\E$}--models}
%%%%%%%%%%%%%%%%%%%%%%%%%%%%%%%%%%%%%%%%%%
\label{sec:renorm}
  %%%%%%%%%%%%%%%%%%%%%%%%%%%%%%%%%%%%%%%%%%%
  \subsection{Euclidean RG flow}
  %%%%%%%%%%%%%%%%%%%%%%%%%%%%%%%%%%%%%%%%%%%
  \label{sec:5.1}
 The Lorentzian $\E$-models of the form \eqref{2nd} are automatically one-loop renormalizable which means that
  the one-loop quantum corrections result in the RG flow of the 
operator $\E$. The explicit flow formula was derived in \cite{SST10} and it respects the self-adjointness  and the involutivity of the operator $\E$. It reads
\be\frac{d\E}{d\tau}=\frac{1}{4}(\E\M\E- \one\M\one), \quad \M=[[\E,\E]]-[[\one,\one]].\label{fl+}\ee
In the  Euclidean case   $\E^2=-\one$, the formula \eqref{fl+}
gets modified as   \cite{SST25}  
\be\frac{d\E}{d\tau}=\frac{1}{4}(\E\M\E+ \one\M\one), \quad \M=[[\E,\E]]+[[\one,\one]].\label{fl-}\ee
Here $[[.,.]]$ is a bilinear operation which associates to two linear operators $\cP_1,\cP_2:\D\to \D$ certain linear operator $[[\cP_1,\cP_2]]:\D\to \D$. This operation can be defined invariantly as follows: if  we write $\cP_1,\cP_2$ in the Sweedler way as
$$\cP_{j}x=\cP'_j(\cP''_j,x)_\D, \quad \cP'_j,\cP''_j\in\D, \quad j=1,2,$$ then
$$[[\cP_1,\cP_2]]x=[\cP'_1,\cP'_2]([\cP''_1,\cP''_2],x)_\D.$$
If instead we pick a  basis $T_A$ of $\D$,  then $[[\cP_1,\cP_2]]$ is the operator  defined via its matrix elements as,
 $$ [[\cP_1,\cP_2]]^{AB}=\cP_1^{KM}\cP_2^{LN}F_{KL}^{\ \ A}F_{MN}^{\ \ B}.$$
Here we use a notation
$$[T_A, T_B]=F_{AB}^{\ \ C}T_C, \quad \eta_{AB}=(T_A, \one T_B)_\D=(T_A, T_B)_\D,  \quad (\cP_j)_{AB}=(T_A,\cP_j T_B)_\D, \quad j=1,2, $$ and we define the inverse matrix $\eta^{KL}$, with the help of which we can raise the indices. In particular,
we have 
$$\cP_{j}^{AB}=\eta^{AC}\eta^{BD}(\cP_{j})_{CD}.$$

\medskip

 Note that the bracket $[[\one,\one]]$ of the identity operator with itself is a non-trivial linear operator on $\D$ given entirely by the structure of the double $\D$. 

We shall illustrate the use of the formulas \eqref{fl+} and \eqref{fl-} in Section  \ref{sec:biYB} in the context of the bi-Yang--Baxter deformations.
%%%%%%%%%%%%%%%%%%%%%%%%%%%%%%%%%%%%%%%%%%%%%%%%%%%%%
\subsection{Strict renormalizability}\label{sec:5.2}
%%%%%%%%%%%%%%%%%%%%%%%%%%%%%%%%%%%%%%%%%%%%%%%%%%%%%
All $\E$-models, Lorentzian or Euclidean, are automatically  renormalizable
since the RG flow equations \eqref{fl+}, \eqref{fl-}  have always solutions labeled by initial conditions. Moreover, if the RG flow $\E(\tau)$ is bound to some predetermined set of $\E$-operators  modulo generalized diffeomorphisms we say that the model is strictly renormalizable with respect to this predetermined set\footnote{If it is clear from the context which is the predetermined set, we say just that the model is strictly renormalizable.}. 
 
\smallskip

As an example of the strict renormalizability, let the predetermined set be a one-parameter family  $\E_s(p)$  of the  $\E$-operators 
(here $s=+$ corresponds to the Euclidean case while $s=-$ to the Lorentzian one). The parameter $p$ is a real parameter and runs over a suitable interval.

\smallskip 

The family $\E_s(p)$ is thus strictly renormalizable if there exists a real function $p_s(\tau)$ and a $\D$-valued function $\xi_s(\tau)$ such that $\E_s(\tau):=\E_s(p_s(\tau))$ satisfies the following modified RG flow equation
\be  \frac{d\E_s(\tau)}{d\tau}=\frac{1}{4}(\E_s(\tau)\M_s(\tau)\E_s(\tau)+s \one\M_s(\tau)\one)+[\ad_{\xi_s(\tau)},\E_s(\tau)].\label{784}\ee 
Here $$\M_s(\tau)=[[\E_s(\tau),\E_s(\tau)]]+s[[\one,\one]].$$ 

The  presence of the  $\xi_s$-term in \eqref{784} accounts for the expression "modulo generalized diffeomorphisms". It means that $\E_s(\tau)$ does not quite solve the RG flow equation \eqref{fl+}, however,   if we define a $D$-valued function $h_s(\tau)$  by the requirement
\be d_\tau h_s(\tau) h_s(\tau)^{-1}=\xi_s(\tau),\ee
and the function $\tilde \E_s(\tau)$ by
\be
\widetilde\E_s(\tau)
=
\Ad_{h(\tau)^{-1}}\E_s(\tau)\Ad_{h(\tau)},
\label{770}
\ee
then we readily check that $\tilde \E_s(\tau)$ does solve \eqref{fl+}
\[
\frac{d\widetilde\E_s}{d\tau}
=
\frac14
\left(
\widetilde\E_s\widetilde\M_s\widetilde\E_s
+s\widetilde\M_s
\right), \quad \tilde\M_s=[[\tilde\E_s,\tilde\E_s]]+s[[\one,\one]].
\]
Following the discussion around \eqref{225} and \eqref{227}, 
 the factors \(\Ad_{h_s(\tau)^{-1}}\) and \(\Ad_{h_s(\tau)}\) placed around the operator \(\E_s(\tau)\)  correspond merely to field redefinitions, or, at the level of the \(\sigma\)-model, to diffeomorphisms of the target.

%%%%%%%%%%%%%%%%%%%%%%%%%%%%%%%%%%%%%%%%%%%%%%%%%%%
\subsection{ \texorpdfstring{$\E$}--Wick rotation and strict  renormalization}
%%%%%%%%%%%%%%%%%%%%%%%%%%%%%%%%%%%%%%%%%%%%%%%%%%%
\label{sec:5.3}

  We may ask a natural  question: Does the strict  renormalizability of a Lorentzian $\E_-$-model imply the strict renormalizability of the $\E$-Wick rotated Euclidean $\E_+$-model?

  \smallskip

We are not able to answer this question in general, but we can do it for the case studied in Section \ref{sec:4.2}, where $p=w$ and
$\E_s(w)$ has the form \eqref{689}
 $$ \E_s(w) \bpm \al \\ a\epm =\bpm  -(w\ti S)^{-1}\ti A& (w\ti S)^{-1} \\  -sw\ti S -\ti A (w\ti S)^{-1}\ti A &\ti A(w\ti S)^{-1}\epm\bpm \al  \\a \epm, \quad s=\pm. $$
 The assumption that $\E_s(w_s(\tau))$, $s=\pm$  solves equation \eqref{784}   means in particular that the function
 $w_s(\tau)$ solves the differential equation of the form
 \be \frac{dw_s(\tau)}{d\tau}=\beta_s(w_s(\tau)).\label{795}\ee
 Here $\beta_s(w)$, $s=\pm$  are respectively  the   Euclidean and Lorentzian beta functions governing the respective flows of the coupling constant $w$.

 \smallskip

Now the $\E$-Wick relation 
$\E_+(w)=-\ri\E_-(-\ri w)$ and the equation \eqref{784} imply together the relation
\be \beta_+(w)=-\beta_-(-\ri w).\label{804}\ee
 We thus observe that establishing the fact that a Lorentzian and a Euclidean models are related by the $\E$-Wick rotation may permit to deduce the strict renormalizability of one from the  strict renormalizability of the other.
 We shall see how it works in practice in Section \ref{sec:biYB}.

%%%%%%%%%%%%%%%%%%%%%%%%%%%%%%%%%%%%%%%%%%%%%%%%%%%%%
\section{Euclidean bi-Yang--Baxter deformation}
%%%%%%%%%%%%%%%%%%%%%%%%%%%%%%%%%%%%%%%%%%%%%%%%%%%%
\label{sec:biYB}

The Yang--Baxter $\sigma$-model on the Lorentzian world-sheet  was introduced in \cite{K02} as a Poisson--Lie deformation of the principal chiral model. Its two-parameter extension, now usually referred to as the bi-Yang--Baxter or Klim\v c\'\i k $\sigma$-model, was proposed in \cite{K09}, while the existence of its Lax pair and hence its classical integrability were established in \cite{K14}. A detailed subsequent analysis of the model, including its Hamiltonian structure, target-space geometry and one-loop renormalization, was carried out by Bazhanov, Kotousov and Lukyanov in \cite{BKL18}.

\smallskip 

An intriguing feature of the formulation used in \cite{BKL18} is the rôle played by analytic continuation of the two deformation parameters. For purely imaginary values of these parameters, with their ratio kept real, both the target-space metric and the torsion are real, and one obtains a real Lorentzian $\sigma$-model. For real values of the same parameters, on the other hand, the metric remains real whereas the torsion becomes purely imaginary, so that the corresponding Lorentzian action is complex. From the perspective developed in the present paper, this latter branch admits a natural reinterpretation: after the standard Wick rotation of the world-sheet, the extra factor of $\ri$ multiplying the antisymmetric-tensor contribution converts it into a real Euclidean action. Thus, although the Euclidean interpretation was not made explicit in \cite{BKL18}, the analytically continued model considered there already contains, in an implicit form, the Euclidean bi-Yang--Baxter $\sigma$-model and the renormalization-group flow relevant to it.

\smallskip

Related analytic continuations have appeared in the study of Yang--Baxter deformations of symmetric-space $\sigma$-models. In particular, Fateev and Litvinov \cite{FL18} and Litvinov and Spodyneiko \cite{LS18} investigated dual descriptions of the $\eta$-deformed $O(N)$ models, while Bychkov and Litvinov \cite{BL25} recently analyzed a new ``dual regime'' of Yang--Baxter-deformed $O(2N)$ models. These theories are coset models rather than $\sigma$-models on group manifolds and would naturally be described, in the $\E$-model language, by dressing cosets. Their analyses involve expansions around the complex fixed point $\eta=\ri$, which, from the viewpoint advocated here, is naturally interpreted as a Euclidean fixed  point.

\smallskip 

The purpose of the present work is not merely to rephrase these analytic continuations. Instead of beginning with a Lorentzian action, allowing its couplings to become complex and subsequently searching for a real Euclidean slice, we develop from the outset a real Euclidean $\E$-model formalism. Within this framework, the Euclidean Yang--Baxter and bi-Yang--Baxter $\sigma$-models arise directly from real first-order data. Their Poisson--Lie duality, classical integrability and one-loop renormalization can consequently be studied without passing through a complex Lorentzian theory. This construction places the previously observed analytically continued regimes into a systematic geometric and algebraic framework.

\smallskip

The $\E$-model perspective also reveals phenomena which are not visible from the direct analytic continuation of the conventional $\sigma$-model action. For the Euclidean Yang--Baxter model, the second-order target-space geometry becomes singular at the Euclidean fixed point, however,  its Poisson--Lie T-dual (as well as the underlying Euclidean $\E$-model) remains regular there. The resulting dual theory is a deformation of the hyperbolic Wess--Zumino--Witten model on $\K^\bC/\K$.

\smallskip

An analogous, and even more remarkable, phenomenon occurs for the Euclidean bi-Yang--Baxter model. Its target-space geometry may again become singular at the Euclidean fixed points, although the corresponding first-order $\E$-operator remains perfectly regular. Passing to the Poisson--Lie dual description removes this apparent geometric degeneration and yields a smooth deformation of the hyperbolic Wess--Zumino--Witten background. This deformation is not only classically integrable and Poisson--Lie dualizable: it remains one-loop conformal for every value of the deformation parameter. The identification of this non-trivial one-loop conformal family constitutes one of the principal new results of the present paper.

\smallskip 

This result may also be relevant to the quantum theory. Guillarmou, Kupiainen and Rhodes have recently given a rigorous probabilistic construction of the path integral of the $\mathbb H^3$ Wess--Zumino--Witten model \cite{GKR25}. The smooth one-loop conformal deformation obtained here therefore provides a natural candidate for investigating whether their probabilistic methods can be extended beyond the undeformed hyperbolic WZW point. We do not address this difficult question in the present work, but the Euclidean $\E$-model construction supplies a distinguished family of real, one-loop conformal and classically integrable actions on which such an extension may be explored.

\smallskip

The plan of the present Section \ref{sec:biYB} is as follows: First we review the concept of the Lu-Weinstein Drinfeld double and then we study the Euclidean Yang-Baxter $\si$-model and its Poisson-Lie dual. In particular, we show that the Euclidean RG fixed  point corresponds to a singular geometry on the compact group target but the dual geometry is perfectly  smooth  and it coincides with the hyperbolic WZW model. Then we describe the Euclidean bi-Yang-Baxter deformation and  we establish its integrability and  renormalizability by using the $\E$-model formalism. Most importantly we show that the Poisson-Lie dual at the Euclidean fixed point constitutes a one-parameter one-loop conformal deformation of the hyperbolic WZW model.

%%%%%%%%%%%%%%%%%%%%%%%%%%%%%%%%%%%%%%%%%%%%%%%
\subsection{Lu--Weinstein Drinfeld double}\label{sec:6.1}
%%%%%%%%%%%%%%%%%%%%%%%%%%%%%%%%%%%%%%%%%%%%%%%

Lu--Weinstein Drinfeld double \cite{LW} is the complexification $K^\bC$
of the simple compact Lie group $K$. It is to be  viewed as a  real group, for example, $D=SU(2)^\bC=SL(2,\bC)$ is a six-dimensional real group. The non-degenerate symmetric invariant bilinear form $(.,.)_\D$ is given by the expression 
\be (x,y)_\D:=\frac{1}{\eta}\Im\tr(xy)\equiv \frac{1}{2\ri\eta}\tr(xy-x^\dagger y^\dagger),\quad x,y\in\D, \quad \eta>0, \label{bif}\ee
where the symbol $\dagger$ stands for the standard Hermitian conjugation on $\K^\bC$ and $\eta$ is a  parameter.

\smallskip 

It turns out that the Lu--Weinstein double is perfect in the sense of Section  \ref{sec:Emod3}. One of the maximally isotropic subgroups is   the compact subgroup $K\subset K^\bC$, while the other $\ti K$ is the subgroup $AN\subset K^\bC$ which appears in the
Iwasawa decomposition $K^\bC=K\ti K\equiv KAN$. 
In particular, if $K^\bC=SL(n,\bC)$ then $AN$ is formed by the upper-triangular matrices having real positive numbers on the diagonal. In the general case, it is given by the exponentiation of the Lie subalgebra $\ti\K$ 
\be \tilde\K:=(R-\ri)\K\subset \K^\bC,\label{554}\ee
where the $\bR$-linear operator $R:\K^\bC\to\K^\bC$   is the Yang--Baxter operator. It is defined as 
  $$RH^\mu=R\ri H^\mu=0,\quad RE^\alpha=- {\rm sign}{(\alpha)}\ri E^\alpha,$$
  where $H^\mu$ stand for the Cartan generators and $E^\alpha$
  are the step generators of $\K^\bC$.

  \medskip

  It is not difficult to verify that the vector subspace 
  $(R-\ri)\K$ is indeed a maximally isotropic Lie subalgebra of $\K^\bC$. For that, it is sufficient to use the antisymmetry  of the Yang--Baxter operator  
  $$\tr((Rx)y+x(Ry))=0, \quad x,y\in \K^\bC$$
 as well as the  so-called Yang--Baxter identity
  $$[Rx,Ry]=R([Rx,y]+[x,Ry])+[x,y], \quad x,y\in \K^\bC. $$
 We also infer from \eqref{554}, that there is a natural invertible linear map $\ups:\K\to\ti\K$ defined as 
\be \ups(a)=\eta(R-\ri)a, \quad a\in\K.\label{720}\ee
 For the  Lu--Weinstein double it then holds
 \be (\ups(a),b)_\D=-\tr(ab),\quad a,b\in \K.\label{574}\ee

We easily infer from \eqref{466},\eqref{469} and \eqref{574}, that  for the specific case of the Lu--Weinstein double, the general formulas
for the $\sigma$-models derived from a Lorentzian $\E$-model and  from
its $\E$-Wick rotated Euclidean counterpart become respectively 
    \be {\cal S}_-(m)=    -\jp\int dt\oint d\si \tr (\pa_+ mm^{-1}\ups^{-1}( \ti S_--\ti A_-+\Pi(m))^{-1} \pa_- mm^{-1}),\label{583}\ee
    while the $\E$-Wick rotated Euclidean $\sigma$-model action on $K$ 
is obtained from \eqref{378} and \eqref{461}
  \be {\cal S}_+(m)=    -\jp\int dt\oint d\si \tr(\pa_z mm^{-1}\ups^{-1}( \ti S_--\ri \ti A_-+\ri\Pi(m))^{-1} \pa_\bz  mm^{-1}).\label{586}\ee
 We easily find
  $$\Ad_m \ups a=\eta\Ad_m(R-\ri\one)a=\eta(\Ad_mR-R\Ad_m )a+\eta(R-\ri\one)\Ad_m a=\eta(\Ad_mR-R\Ad_m)a+\ups\Ad_m a.$$
   Recalling the definitions \eqref{367} of the operators $a_m,b_m$, 
  we thus  observe that
  $$a_m\ups a=\ups \Ad_m a, \quad b_m\ups a=\eta(\Ad_mR-R\Ad_m) a.$$
  It follows
    \be \Pi(m)\ups a =b_ma_m^{-1}\ups a=b_m \ups \Ad_{m^{-1}} a =\eta (R_{m^{-1}}-R)a, \quad
\label{588}\ee
  where
  \be  R_{m^{-1}}:=\Ad_mR\Ad_{m^{-1}}.\label{595}\ee
 Using \eqref{588} 
 we may rewrite the actions \eqref{583} and \eqref{586} as
 \be {\cal S}_-(m)=    -\jp\int dt\oint d\si \tr (\pa_+ mm^{-1}( U-V+\eta R_{m^{-1}}-\eta R)^{-1} \pa_- mm^{-1}),\label{590}\ee
\be {\cal S}_+(m)=    -\jp\int dt\oint d\si \tr (\pa_z mm^{-1}( U-\ri V+\ri\eta  R_{m^{-1}}- \ri \eta R)^{-1} \pa_\bz mm^{-1}),\label{592}\ee
 where
 \be U:=\ti S_-\ups, \quad V:=\ti A_-\ups,\label{597}\ee
Note that in terms of $U,V$, the pair of $\E$-Wick rotated $\E_\pm$ operators  \eqref{456}    can be rewritten as 
 \be \E_\pm a=VU^{-1}a +\ups U^{-1}a, \quad  \E_\pm \ups b=(\mp U-VU^{-1}V)b -\ups U^{-1}Vb, \label{612}\ee

%%%%%%%%%%%%%%%%%%%%%%%%%%%%%%%%%%%%%%%%%%%%%%%%%%%
 \subsection{Yang-Baxter deformation and hyperbolic WZW model}
 \label{sec:6.2}
%%%%%%%%%%%%%%%%%%%%%%%%%%%%%%%%%%%%%%%%%%%%%%%%%%%
 The Yang--Baxter $\E$-model  on the Lu-Weinstein double is characterized by
 the following choice of the $\E$-Wick rotated $\E_\pm$ operators
\be \E_s(a+\ri b)= \frac{sw}{\eta}b-\ri \frac{\eta}{w} a, \quad a,b\in\K, \quad s=\pm.\label{611}\ee
Here the choice  $s=+$ ($-$) corresponds to the Euclidean (Lorentzian) model.

\smallskip

Comparing with \eqref{612}, this choice corresponds to 
\be V=-\eta  R, \quad U=w\one_\K,\label{602}\ee
therefore from \eqref{590} and \eqref{592}, we obtain
 \be {\cal S}_-(m)=    -\jp\int dt\oint d\si \tr (m^{-1}\pa_+ m(w \one+ \eta R)^{-1} m^{-1}\pa_- m),\label{604}\ee
\be {\cal S}_+(m)=    -\jp\int dt\oint d\si \tr (m^{-1}\pa_z m(w \one + \eta\ri R)^{-1} m^{-1}\pa_\bz m).\label{605}\ee
We recognize in \eqref{604} the action of the standard Lorentzian Yang--Baxter $\sigma$-model \cite{K02,K09}. For that reason, we refer to the $\E$-Wick rotation \eqref{605} of \eqref{604} as the  {\it Euclidean Yang--Baxter $\sigma$-model}.

 \bre \small Note that the Euclidean action \eqref{605} does not represent the Wick rotation of the Lorentzian action \eqref{604} because of the factor $\ri$ which multiplies the Yang--Baxter operator $R$. At the same time, it is this factor $\ri$ which makes the Euclidean action real.
 \ere

In order to find the actions of the dual Yang-Baxter models,
we have to determine the precise form of the operators $S_s,A_s$, $s=\pm$ to be inserted into the general expressions \eqref{397} and \eqref{399}. 

\smallskip

Let $T^i$ be a basis of $\K$ verifying
\be \tr(T^i T^j)=-\delta^{ij}\label{1080a}\ee
and $\tau_i=\Upsilon(T^i)$ the corresponding basis of $\ti\K$ so that
\be (\tau_i,T^j)_\D=\delta^j_i.\label{1082a}\ee
Then we find from \eqref{611} 
$$ \E_sT^i=\frac{1}{w}\tau_i-\frac{\eta}{w}RT^i,\quad \E_s\tau_i= \frac{\eta}{w}R\tau_i-sw T^i-\frac{\eta^2}{w}R^2T^i$$
and from the general formula \eqref{335a}
we infer
\be S_sT^i=\frac{1}{w}\frac{w^2-s\eta^2(R^2+1)}{w^2-s\eta^2}\tau_i,\quad  A_sT^i=\frac{\eta}{\eta^2-sw^2}R\tau_i.\label{788} \ee

\smallskip

Inserting \eqref{788} into \eqref{397} and \eqref{399} and introducing a variable $P=\ti m\ti m^\dagger$, we obtain  
$$\ti {\cal S}_\pm(\ti m)=\tilde S_\pm(P),$$
where
\begin{multline}\tilde S_-(P)=-\frac{\ri}{8\eta}\oint dt d\si \tr \left(\frac{w-\ri\eta+(w+\ri\eta)\Ad_P}{w-\ri\eta-(w+\ri\eta)\Ad_P} (P^{-1}\xp P)(P^{-1}\xm P)\right) \\-\frac{\ri}{8\eta}\int d\si d^{-1}\tr(P^{-1}dP,[P^{-1}P' ,P^{-1}dP ]).\label{lyb}\end{multline}\begin{multline}\tilde S_+(P)=-\frac{1}{8\eta}\oint dt d\si \tr \left(\frac{w+\eta+(w-\eta)\Ad_P}{w+\eta-(w-\eta)\Ad_P} (P^{-1}\xz P)(P^{-1}\xbz P)\right) \\-\frac{\ri}{8\eta}\int d\si d^{-1}\tr(P^{-1}dP,[P^{-1}P' ,P^{-1}dP ]).\label{eyb}\end{multline}\bigskip
We note that the Lorentzian formula \eqref{lyb} was first obtained in \cite{K16} while the Euclidean one \eqref{eyb} is new.

\smallskip

The action \eqref{eyb} is nothing but a one-parameter deformation of the hyperbolic WZW model, where the deformation parameter is equal to
$$\frac{w-\eta}{w+\eta}$$
and the hyperbolic WZW model corresponds to the case where this parameter vanishes, i.e. $w=\eta$.

\smallskip

We know that it holds
$$\E_+(w)=-\ri \E_-(-\ri w),$$
therefore, following the observations made in Remark \ref{1}, we expect that the analytic continuations $w\to -\ri w $ and $\si\to -\ri\sigma$ of the action \eqref{604} should give the action \eqref{605}. We check easily that this is indeed true. Similarly, the same analytic continuations transform \eqref{lyb} in \eqref{eyb}. We stress, however, that had 
we not had our Euclidean $\E$-model formalism, those  analytical continuations could suggest the duality between the Euclidean actions \eqref{605} and \eqref{eyb} but it would not prove it. 

\smallskip

As far as the integrability of the $\E$-Wick rotated Yang-Baxter model is concerned, we shall establish it later on in Section \ref{sec:6.4} as the special case of the proof of the integrability of the more general bi-Yang-Baxter deformation. However, as far as the renormalizability is concerned, we prefer to give to it a special treatment right here and not to wait until the study of the renormalization of the bi-Yang-Baxter case. The reason is an interesting  property of the Euclidean Yang-Baxter RG fixed point which we now describe.

\smallskip

Let us first  show that the family $\E_s(w)$ of the Yang-Baxter operators \eqref{611} is strictly renormalizable.
Set
\[
V^a:= \ri\eta T^a ,
\]
where  \(T^a\) is a normalized basis of \(\K\), namely
\[
\tr(T^aT^b)\equiv (T^a,T^b)_\K=-\delta^{ab}.
\]
We  have
\[
(V^a,V^b)_\D=0,
\qquad
(V^a,T^b)_\D=-\delta^{ab}, \qquad (T^a,T^b)_\D=0.
\]
The  Yang--Baxter operator $\E_s(w)$ can then be rewritten as
\be
\E_s(w)(u+\ri v)
=
\frac1w V^a(V^a,u+\ri v)_\D
-sw T^a(T^a,u+\ri v)_\D ,
\ee
or, in the bra-ket formalism, as
\[
\E_s(w)=\frac1w |V^a)(V^a|-sw |T^a)(T^a|.
\]
Similarly, the identity operator \(\one\) can be written as
\be
\one=-|V^a)(T^a|-|T^a)(V^a|.
\ee
Using the identity 
\[
(T^a,[T^c,T^d])_\K(T^b,[T^c,T^d])_\K=c_\K\delta^{ab},\]
where $c_\K$ is the double of the dual Coxeter number, it is easy   to calculate\begin{multline}
[[\E_s,\E_s]]
=
\frac1{w^2}
|[V^a,V^b])([V^a,V^b]|
-s|[V^a,T^b])([V^a,T^b]|
-s|[T^a,V^b])([T^a,V^b]|
+\\+w^2|[T^a,T^b])([T^a,T^b]|=
c_{\K}\left[
\left(w^2+\frac{\eta^4}{w^2}\right)|T^a)(T^a|
-2s|V^a)(V^a|\right],
\label{EEbracket}
\end{multline}
\begin{multline}
[[\one,\one]]
=
|[V^a,V^b])([T^a,T^b]|
+|[V^a,T^b])([T^a,V^b]|
+|[T^a,V^b])([V^a,T^b]|+\\
+|[T^a,T^b])([V^a,V^b]|=
2c_{\K}
\left(
|V^a)(V^a|-\eta^2|T^a)(T^a|
\right).
\label{11bracket}
\end{multline}
Therefore 
\be\M_s=[[\E_s,\E_s]]+s[[\one,\one]]=
\frac{c_{\K}}{w^2}
\left( \eta^2-sw^2\right)^2
|T^a)(T^a|.
\label{Msfinal}
\ee
Furthermore, we have
$$
\E_s T^a=-\frac1w V^a,
\qquad
\E_s V^a=sw T^a,
$$
hence
\be \frac14\left(\E_s\M_s\E_s+s\M_s\right)
=
\frac{c_{\K}}{4w^2}
\left(\eta^2-sw^2\right)^2
\left(
\frac1{w^2}|V^a)(V^a|
+s|T^a)(T^a|
\right).\label{1181}
\ee
On the other hand, we find
\be 
\frac{\partial\E_s}{\partial w}
=
-\frac1{w^2}|V^a)(V^a|
-s|T^a)(T^a|,\label{1195}
\ee
Therefore if  a function  $w_s(\tau)$ solves the differential equation
\be
\frac{dw_s(\tau)}{d\tau}=\beta_s(w_s(\tau))
\equiv -\frac{c_{\K}}{4w_s(\tau)^2}
\left(\eta^2-sw_s(\tau)^2\right)^2 ,
\label{wflow}
\ee
then the flow $\E_s(\tau):=\E_s(w_s(\tau))$  solves 
 the RG flow equation
\be \frac{d\E_s(\tau)}{d\tau}=\frac{1}{4}(\E_s(\tau)\M_s(\tau)\E_s(\tau)+s \one\M_s(\tau)\one), \quad \M_s(\tau)=[[\E_s(\tau),\E_s(\tau)]]+s[[\one,\one]].\label{974}\ee
We conclude that the  family $\E_s(w)$ of the Yang-Baxter operators \eqref{611} is strictly renormalizable with the beta function given by
\eqref{wflow}. In particular, the formula \eqref{wflow} confirms the general result obtained in Section \ref{sec:5.3}  
 \be\beta_+(w)=-\beta_-(-\ri w).\label{1077}\ee

It is interesting to observe that 
whereas no real finite fixed point exists in the Lorentzian case, there exist fixed points in the Euclidean one. What is even more interesting is that while at the Euclidean fixed points the target space geometry of the Euclidean Yang-Baxter $\sigma$-model gets singular
(because the operators $\one\pm \ri R$ are not invertible), the Euclidean Poisson-Lie dual action \eqref{eyb}  is perfectly regular there. In particular, at the fixed point $w=\eta$, the dual action
\eqref{eyb} is that of the standard hyperbolic WZW model.

We may thus conclude that  the one-loop conformal singularity of the
Euclidean Yang-Baxter model gets resolved by passing to the Poisson-Lie dual picture.

%%%%%%%%%%%%%%%%%%%%%%%%%%%%%%%%%%%%%%%%%%%%%%%%%%%%%%%%
 \subsection{Poisson-Lie T-duality in the bi-Yang-Baxter context}
%%%%%%%%%%%%%%%%%%%%%%%%%%%%%%%%%%%%%%%%%%%%%%%%%%%%
  The bi-Yang--Baxter $\E$-model on the Lu-Weinstein double is characterized by
 the following choice of the $\E$-Wick rotated $\E_\pm$ operators
\be \E_s(w)(u+\ri v)=  -\frac{\mu}{w} Ru+\frac{w}{\eta}\left(s\one+\frac{\mu^2}{w^2}R^2\right)v-\ri \frac{\eta}{w} u+\ri\frac{\mu}{w} Rv, \quad u,v\in\K,\quad s=\pm.\label{1082}\ee
In particular, in the notation inroduced in Section \ref{sec:6.1} this choice corresponds to 
\be V=-(\eta+\mu)  R, \quad U=w\one_\K,\label{1084}\ee
therefore from \eqref{590} and \eqref{592}, we obtain
 \be {\cal S}_-(m)=    -\jp\int dt\oint d\si \tr (\pa_+ mm^{-1}( w\one +\eta R_{m^{-1}}+\mu R)^{-1} \pa_- mm^{-1}),\label{1086}\ee
\be {\cal S}_+(m)=    -\jp\int dt\oint d\si \tr (\pa_z mm^{-1}(w\one +\ri\eta R_{m^{-1}}+\ri\mu R)^{-1} \pa_\bz mm^{-1}).\label{1088}\ee
We recognize in \eqref{1086} the action of the standard Lorentzian bi-Yang--Baxter $\sigma$-model \cite{K02,K14}. For that reason, we refer to the $\E$-Wick rotation \eqref{1088} of \eqref{1086} as to the  {\it Euclidean bi-Yang--Baxter $\sigma$-model}. 

\smallskip

As far as the dual Yang-Baxter models are  concerned,
we have  to determine the precise form of the operators $S_s,A_s$, $s=\pm$ to be inserted into the general expressions \eqref{397} and \eqref{399}.  

\smallskip 

 Consider again the basis $T^i,\tau_i$ defined by the relations \eqref{1080a} and \eqref{1082a}.
We then find from \eqref{1082} 
\be \E_sT^i=\frac{1}{w}\tau_i-\frac{\eta+\mu}{w}RT^i,\quad \E_s\tau_i= \frac{\eta+\mu}{w}R\tau_i-sw T^i-\frac{(\eta+\mu)^2}{w}R^2T^i\ee
and from the general formula \eqref{335a}
 $$ \E_\pm \bpm \al \\ a\epm=\bpm \mp A_\pm S_\pm^{-1}&S_\pm\pm A_\pm S_\pm^{-1}A_\pm\\  \mp S_\pm^{-1}&\pm S_\pm^{-1}A_\pm\epm\bpm \al  \\a \epm,\quad  \al\in\ti\G,\ a\in\G, $$
we infer
\be S_sT^i=\frac{1}{w}\frac{(\eta+\mu)^2(R^2+1)-sw^2}{(\eta+\mu)^2-sw^2}\tau_i,\quad  A_sT^i=\frac{\eta+\mu}{(\eta+\mu)^2-sw^2}R\tau_i.\label{1350} \ee
Inserting \eqref{1350} into \eqref{397} and \eqref{399} and introducing a variable $P=\ti m\ti m^\dagger$, we obtain  
\be \ti {\cal S}_\pm(\ti m)=\tilde S_\pm(P),\label{1247}\ee
where 
\begin{multline}\tilde S_-(P)=-\frac{\ri}{8\eta}\oint dt d\si \tr \left(\frac{w-\mu R-\ri\eta+(w-\mu R+\ri\eta)\Ad_P}{w-\mu R-\ri\eta-(w-\mu R+\ri\eta)\Ad_P} (P^{-1}\xp P)(P^{-1}\xm P)\right) \\-\frac{\ri}{8\eta}\int d\si d^{-1}\tr(P^{-1}dP,[P^{-1}P' ,P^{-1}dP ]),\label{eyc}\end{multline}
\begin{multline}
\tilde S_+(P)
=
-\frac{1}{8\eta}\oint dt\,d\sigma\;
\tr\left[
\left(
\frac{
w-\ri\mu R+\eta
+
(w-\ri\mu R-\eta)\Ad_P
}{
w-\ri\mu R+\eta
-
(w-\ri\mu R-\eta)\Ad_P
}
\right)
(P^{-1}\partial_z P)(P^{-1}\partial_{\bar z}P)
\right]
\\
-\frac{\ri}{8\eta}
\int d\sigma\,d^{-1}
\tr\left(
P^{-1}dP,
[
P^{-1}P',
P^{-1}dP
]
\right).
\label{eyc+}
\end{multline}
Here we used  the convention
\[
\frac{A}{B}:=B^{-1}A.
\]
 Note that in the Lorentzian case the equality \eqref{1247} was proved in \cite{K15} where
 the Poisson-Lie dual of the bi-Yang-Baxter model was first written in the  form \eqref{eyc}.
 In the Euclidean case, the formula \eqref{eyc+} is a new result. Note that for $\mu=0$ and $w=\eta$ the Euclidean model \eqref{eyc+} reduces to the hyperbolic WZW model.
 %%%%%%%%%%%%%%%%%%%%%%%%%%%%%%%%%%%%%%%%%%%%%%%%%%%%%%%%
 \subsection{Integrability of the bi-Yang-Baxter   \texorpdfstring{$\E$}--model}
%%%%%%%%%%%%%%%%%%%%%%%%%%%%%%%%%%%%%%%%%%%%%%%%%%%%
\label{sec:6.4}
 Our next goal will be to find families of operators $O_s(\lm):\D\to\K$ which verify the sufficient conditions of integrability
\begin{equation}
\label{1154}
[O_s(\lm)x, O_s(\lm)\E_s x]_\K = O_s(\lm)[x,\E_s x]_\D, \quad \forall\,x \in \D, \quad s=\pm.
\end{equation}
For that it will be more convenient to parametrize every element $x$ of $\D$ as
$$x=y+\frac{1}{w}(\mu R+\ri\eta)z, \quad y,z\in\K.$$
Then we find
$$\E_sx=sz-\frac{1}{w}(\mu R+\ri\eta)y, \quad s=\pm.$$
Using the modified classical Yang–Baxter equation
\[
[Rx,Ry]=R([Rx,y]+[x,Ry])+[x,y]\equiv R[x,y]_R+[x,y],
\] we find also
\begin{multline}
[x,\E_s x]=\left(s-\frac{\eta^2}{w^2}+\frac{\mu^2}{w^2}\right)[y,z]+\frac{\mu}{w}([Ry,y]+s[Rz,z])+\frac{1}{w^2}(\mu R+\ri\eta)\mu[y,z]_R.
\end{multline}
 We look for the families of operators $O_s(\lm):\D\to\K$ in the form
 \be O_s(\lm)x=\frac{1}{w}f_s(\lm)y+\frac{1}{w}(g_s(\lm)+\mu R)z,  \quad s=\pm.\ee
 It follows
 \begin{multline} [O_s(\lm)x,O_s(\lm)\E_s x]=[\frac{1}{w}f_s(\lm)y+\frac{1}{w}(g_s(\lm)+\mu R)z,\frac{1}{w}sf_s(\lm)z-\frac{1}{w}(g_s(\lm)+\mu R)y]=\\=\left(s\frac{f_s(\lm)^2}{w^2}+\frac{g_s(\lm)^2}{w^2}+\frac{\mu^2}{w^2}\right)[y,z]+\frac{\mu}{w^2} f_s(\lm)(s[Rz,z]+[Ry,y])+\frac{1}{w^2}(\mu R+g_s(\lm))\mu[y,z]_R.\label{649}\end{multline}
 \begin{multline} O_s(\lm)[x,\E_s x]=  \frac{1}{w}f_s(\lm)\left(s-\frac{\eta^2}{w^2}+\frac{\mu^2}{w^2}\right)[y,z]+\frac{\mu}{w^2} f_s(\lm)(s[Rz,z]+[Ry,y])+\frac{1}{w^2}(\mu R+g_s(\lm))\mu[y,z]_R.\label{659}\end{multline}
The integrability condition \eqref{1154} is then fulfilled if it holds
\be s f_s(\lm)^2 + g_s(\lm)^2 + \mu^2 =\frac{f_s(\lm)}{w}\left(sw^2- \eta^2 + \mu^2 \right), \quad s=\pm.\label{656}\ee
  This condition admits rational solution  
\be
f_s(\lambda)=\frac{w^2 - s\eta^2 + s\mu^2}{2w}
+w\rho_s\,\frac{1 - s\lambda^2}{1 + s\lambda^2},
\qquad
g_s(\lambda)=\frac{2w\,\rho_s\,\lambda}{1 + s\lambda^2},
\qquad
\rho_s^2=\frac{\left(s - \frac{\eta^2}{w^2} + \frac{\mu^2}{w^2}\right)^2}{4}
- s\frac{\mu^2}{w^2}.
\label{696}
\ee
In the Lorentzian case, this gives the Lax pair \eqref{291} verifying the reality condition \eqref{642}
\be L_\pm(\lm)=O_-(\lm)j_\pm= O_-(\lm)\, W_m^\pm\, \pa_\pm m m^{-1}.\label{669}\ee
\begin{subequations}\label{688}
    \begin{align}
        L_+(\lm)=(f_-(\lm)-g_-(\lm)-\mu R)(w-\eta R_{m^{-1}}-\mu R)^{-1}\pa_+mm^{-1},\\
       L_-(\lm)=( f_-(\lm)+g_-(\lm)+\mu R)(w+\eta R_{m^{-1}}+\mu R)^{-1}\pa_-mm^{-1}.
    \end{align}
\end{subequations}

 In the Euclidean case, the Lax pair \eqref{520} verifying the reality condition \eqref{729} is
$$ L_z(\lm)= O_+(\lm)W_m^z\pa_zmm^{-1}, \quad  L_\bz(\lm) =O_+(\lm)W_m^\bz\pa_\bz mm^{-1},$$ 
where the quantities $W_m^z\pa_zmm^{-1}$, $W_m^\bz\pa_\bz mm^{-1}$ are given by \eqref{295}.
We find
\begin{subequations}\label{694}
    \begin{align}
        L_z(\lm)=( f_+(\lm)-\ri g_+(\lm)-\ri\mu R)(w -\ri\eta R_{m^{-1}}-\ri\mu R)^{-1}\pa_z mm^{-1},\\
        L_\bz(\lm)=( f_+(\lm)+\ri g_+(\lm)+\ri\mu R)(w +\ri\eta R_{m^{-1}}+\ri\mu R)^{-1}\pa_\bz mm^{-1}.
    \end{align}
\end{subequations}
 Note that the Lorentzian Lax pair \eqref{688} is the same as that described by the equations (8.86-89) of \cite{K20}, with the identifications
 $$b_l=\eta,\quad b_r=\mu, \quad a=w, \quad \hat f_\pm=f\mp g, \quad \xi =\frac{\eta^2-\mu^2-w^2+2w\eta\ri-2w^2\rho_-\frac{1-\lm}{1+\lm}}{\eta^2-\mu^2-w^2+2w\eta\ri+2w^2\rho_-\frac{1-\lm}{1+\lm}}.$$

\begin{remark}\small\label{rem:1206} The $\E$-Wick-rotated pair of operators $\E_\pm$ given by \eqref{1082} fits into the set up described in Section \ref{sec:4.2} because  $\E_+(w)=-\ri\E_-(-\ri w)$.  We can  check easily that
$$f_+(\lm,w)=\ri f_-(\ri\lm,-\ri w), \quad  g_+(\lm,w)=g_-(\ri\lm,-\ri w)$$
therefore indeed
$$O_+(\lm,w)=O_-(\Lambda(\lm,w),-\ri w), \quad \Lambda(\lm,w)=\ri\lm.$$
\end{remark}

 %%%%%%%%%%%%%%%%%%%%%%%%%%%%%%%%%%%%%%%%%%%
 \subsection{Lorentzian vs. Euclidean bi-Yang--Baxter renormalization}
 %%%%%%%%%%%%%%%%%%%%%%%%%%%%%%%%%%%%%%%%%%%
In the bi-Yang-Baxter   case, 
we look for a solution of the modified RG flow equation \eqref{784} of the form 
\be
\E_s(\tau)(a+\ri b)
=
-\frac{1}{w_s(\tau)}(\mu R+\ri \eta)a
+
\frac{w_s(\tau)}{\eta}
\left(
s\one+\frac{\mu^2}{w_s(\tau)^2}R^2
\right)b
+\ri\frac{\mu}{w_s(\tau)}Rb,  \quad a,b\in\K, \quad s=\pm,
\label{1914}
\ee
or, equivalently, of the form 
 \[
\E_s(\tau)=\frac{1}{w_s(\tau)} |V^a)(V^a|-sw_s(\tau) |T^a)(T^a|, \quad V^a=(\mu R+\ri\eta)T^a.
\]
Similarly as in Section \ref{sec:6.2}, we find that $\E_s(\tau)$
verifies the relation
\be
\frac{d\E_s}{d\tau}
=
\frac14(\E_s\M_s\E_s+s\M_s)
+\frac{2s\mu\eta}{w_s}
[\ad_{\rho^\vee},\E_s],
\label{1255}
\ee
 provided that $w_s$ satisfies the differential equations
\be
\frac{dw_s}{d\tau}
=\beta_s(w_s)=
-\frac{c_\K}{4w_s^2}
\left((\eta-\mu)^2-sw_s^2 \right)
\left((\eta+\mu)^2-sw_s^2\right).
\label{1264}
\ee
The symbol $\rho^\vee$ which appears in \eqref{1255}  is the standard dual Weyl vector
\[
\rho^\vee
=
\frac12\sum_{\alpha>0}\alpha^\vee
=
-\frac{\ri}{4}\sum_a[RT^a,T^a],
\]
where the sum runs over the positive roots and $\al^\vee$ stand for the coroots.

We conclude that the  family $\E_s(w)$ of the bi-Yang-Baxter operators \eqref{1082} is strictly renormalizable with the beta function given by
\eqref{1264}.
\begin{remark} We could have obtained the formula \eqref{1264} for the
beta functions without performing any computation. Indeed, the Lorentzian bi-Yang--Baxter beta-function  \eqref{1264} was already obtained   in
\cite{SST15} and,   because we are in the framework of  Section \ref{sec:5.3}, the analytic continuation
$$ \beta_+(w)=-\beta_-(-\ri w) $$  gives the beta function of the $\E$-Wick rotated Euclidean model.
 \end{remark}

 \medskip

It is worth observing that the fixed-point structure of the renormalization group flow is substantially richer in the Euclidean case than in the Lorentzian one. Indeed, the beta-function $\beta_-(w)$
has no finite real zero, since both factors are strictly positive for real nonzero \(w,\eta,\mu\):
\[
\beta_-(w)
=
-\frac{c_\K}{4w^2}
\left(w^2+(\eta-\mu)^2\right)
\left(w^2+(\eta+\mu)^2\right).
\]
On the other hand, in the Euclidean case   the beta-function $\beta_+(w)$ factorizes as
\[
\beta_+(w)
=
-\frac{c_\K}{4w^2}
\left(w^2-(\eta-\mu)^2\right)
\left(w^2-(\eta+\mu)^2\right),
\]
and it therefore admits four real fixed points $|\eta\pm\mu|$, $-|\eta\pm\mu|$.
%%%%%%%%%%%%%%%%%%%%%%%%%%%%%%%%%%%%%%%%%%%%%
\subsection{One-loop conformal deformation of the hyperbolic WZW model}
%%%%%%%%%%%%%%%%%%%%%%%%%%%%%%%%%%%%%%%%%%%%%
We have learned in Section  \ref{sec:6.2} that the second order Euclidean Yang-Baxter
$\sigma$-model \eqref{605} is completely singular at the Euclidean RG fixed point $w=\eta$. By saying "completely singular" we mean that it becomes singular in the conformal limit at every point on the simple compact group target $K$. 

\smallskip

Switching on the parameter $\mu$ changes the situation and e.g. for the one-loop conformal choice  $w=\eta+\mu$ the Euclidean  bi-Yang-Baxter model \eqref{1088} is  singular only at some particular points on the target $K$. For example, if $m$ 
 belongs to the Cartan torus then
\(\operatorname{Ad}_m\) commutes with the Yang--Baxter operator \(R\),
and hence \(R_{m^{-1}}=R\). The operator appearing in \eqref{1088}
therefore reduces to
\[
(\eta+\mu)(\one+\ri R),
\]
which fails to be invertible.  
This does not necessarily
invalidate the one-loop conformal choice $w=\eta+\mu$ as a two-dimensional $\sigma$-model on the target $K$; it rather indicates
that this $\sigma$-model geometry is not regular on the
whole group manifold.  

\smallskip 

This situation is reminiscent of gauged WZW models, where the elimination of
the auxiliary gauge fields may break down at target-space points having a
non-trivial stabilizer under the gauge-group action even though the
underlying gauged formulation is well-defined.
Similarly, for the one-loop conformal choice $w=\eta+\mu\neq 0$ the Euclidean  bi-Yang--Baxter \(\E_+\)-operator itself
remains a perfectly well-defined linear operator on the Drinfeld double \(\D\). What degenerates is not the
\(\E\)-model data, but rather the particular passage to the second-order sigma-model
description, where one has to invert an operator which becomes singular along the
Cartan torus.

\smallskip

It actually  happens that the Poisson-Lie dual passage to the second order description  for the one-loop conformal choice $  w=\eta+\mu$  gives perfectly regular $\sigma$-model living on the coset space target $K^\bC/K$ provided $-\eta<\mu\leq 0$.
The space of cosets $K^\bC/K$ is canonically identified with the space $K^{+}$ formed by  positive definite Hermitian elements of $K^\bC$. The dual action is obtained by inserting $w=\eta +\mu$ into \eqref{eyc+} where $P\in K^{+}$
\begin{multline}
\tilde S_+(P)
=
-\frac{1}{8\eta}\oint dt\,d\sigma\;
\tr\left[
\left(
\frac{\one+\frac{\mu}{2\eta}(\one-\ri R)(\one+
 \Ad_P)
}{
\one+\frac{\mu}{2\eta}(\one-\ri R)(\one-
 \Ad_P)
}
\right)
(P^{-1}\partial_z P)(P^{-1}\partial_{\bar z}P)
\right]
\\
-\frac{\ri}{8\eta}
\int d\sigma\,d^{-1}
\tr\left(
P^{-1}dP,
[
P^{-1}P',
P^{-1}dP
]
\right).
\label{eyc++}
\end{multline}

Let us show that the background \eqref{eyc++} is indeed regular for $-\eta<\mu\leq 0$.  Equivalently, this means that the operator 
$$
D_P:=\one+\frac{\mu}{2\eta}(\one-\ri R)(\one-
 \Ad_P)
$$
is invertible for that range of the parameters.

\smallskip

Consider the natural positive-definite Hermitian scalar product $(.,.)_h$ on
$\K^{\bC}$ defined by 
$$(x,y)_h=\tr(x^\dagger y), \quad x,y\in\K^\bC.$$
We denote by $T^*$ the adjoint of an operator $T$ with respect to this
Hermitian product. We find
$$(\one-\ri R)^*=(\one-\ri R), \quad \Ad_P^*=\Ad_P.$$ Moreover, $P=e^{\ri a}$ for some $a\in\K$, so that
\[
\Ad_P=e^{\ad_{\ri a}}>0.
\]
Here the symbol $>$ means that the self-adjoint operator $\Ad_P$ is strictly positive definite.

Similarly, since 
$\operatorname{Spec}(\one-\ri R)\subset\{0,1,2\}$, we have
$$\one-\ri R\geq 0.$$

\smallskip

Set
$$
Q:=\one+q(\one-\ri R),$$
where 
$$ q:=\frac{\mu}{2\eta}.$$
Because $-\frac12<q\leq0$ and $0\leq \one-\ri R\leq2\one$, the operator $Q$ is
strictly positive and therefore invertible. 
Set 
$$
M:=qQ^{-1}(\one-\ri R),
$$
so that 
we can   factorize $D_P$ as
$$
D_P=Q(\one-M\Ad_P).
$$
Furthermore, $M$ is
self-adjoint and negative semi-definite. 
Although $M\Ad_P$ need not itself be self-adjoint, it is similar to the
self-adjoint operator
$$
\Ad_P^{1/2}M\Ad_P^{1/2},
$$
 which is negative semi-definite and hence
$$
\operatorname{Spec}(M\Ad_P)\subset(-\infty,0].
$$
In particular, $1$ cannot be an eigenvalue of $M\Ad_P$. Therefore
$\one-M\Ad_P$, and consequently $D_P$, is invertible for every
$P\in K^+$.

It follows that
$$
D_P^{-1}(\one+\frac{\mu}{2\eta}(\one-\ri R)(\one+
 \Ad_P))$$
is a smooth operator-valued function on the whole target
$K^+$. The Wess--Zumino three-form in \eqref{eyc++} is manifestly
smooth as well, so no pole can occur in the background fields.

\smallskip 

We know prove that the background metric on $K^+$ corresponding to the $\sigma$-model action \eqref{eyc++} is not only everywhere regular but also nowhere degenerate.

We introduce the notation
$$ N_P= \one+\frac{\mu}{2\eta}(\one-\ri R)(\one+
 \Ad_P),$$
 so that 
 the operator entering the action \eqref{eyc++} is
\[
O_P=D_P^{-1}N_P.
\]
 We also set 
\[
L:=\one-\ri R,
\qquad
\widehat L:=\one+\ri R,
\qquad
A:=\Ad_P,
\]
and rewrite the operators $D_P,N_P$ as
\[
D:=\one+qL(\one-A),
\qquad
N:=\one+qL(\one+A).
\]

Let $t$ denote the transpose with respect to the complex bilinear trace
form
\[
b(X,Y):=\tr(XY).
\]
Since
\[
R^t=-R,
\qquad
A^t=A^{-1},
\]
we have
\[
L^t=\widehat L
\]
and hence
\[
D^t=\one+q(\one-A^{-1})\widehat L,
\qquad
N^t=\one+q(\one+A^{-1})\widehat L.
\]
The symmetric operator determining the target-space metric is therefore
\be 
G
:=\frac12\left(O+O^t\right)
=
\frac12D^{-1}
\left(
ND^t+DN^t
\right)
(D^t)^{-1}.
\label{metricoperator}
\ee

A direct expansion gives
 $$
ND^t+DN^t=2(1+2q)\one.$$
 
Substitution into \eqref{metricoperator} yields the factorization
\be
G=(1+2q)D^{-1}(D^t)^{-1}.
\label{metricfactorization}
\ee
 Since $D$ is everywhere invertible in the parameter
range $1+2q>0$, $D^t$ is everywhere invertible as well.
Consequently, $G$ is an invertible operator on $\K^{\bC}$.

\smallskip

It remains to verify that this implies non-degeneracy on the real tangent
space of $K^+$. Note that the tangent current
\[
X=P^{-1}\delta P
\]
satisfies
\[
X^\dagger
=
\delta P\,P^{-1}
=
PXP^{-1}
=
AX.
\]
The real tangent space at $P$ can therefore be identified with
\[
\mathfrak p_P
:=
\left\{
X\in\K^{\bC}
\mathrel{;}
X^\dagger=AX
\right\}.
\]

For an operator $T$ on $\K^{\bC}$, define $T^\sharp$ by
$$T^\sharp X=
(TX^\dagger)^\dagger.
$$
We then have
\[
A^\sharp=A^{-1},
\qquad
L^\sharp=\widehat L.
\]
It follows that
\[
D^\sharp
=
\one+q\widehat L(\one-A^{-1})
\]
and
\[
(D^t)^\sharp
=
\one+q(\one-A)L
\]
Using
\[
L+\widehat L=2\one,
\qquad
L\widehat L=\widehat L L,
\]
one verifies directly that
\be
(D^t)^\sharp D^\sharp=AD^tDA^{-1}.
\label{realityidentity}
\ee

Equation \eqref{metricfactorization} gives
\[
G^{-1}
=
\frac{1}{1+2q}D^tD.
\]
Consequently, using \eqref{realityidentity}, we find
\[
(G^{-1})^\sharp
=
\frac{1}{1+2q}(D^t)^\sharp D^\sharp
=
AG^{-1}A^{-1},
\]
and therefore
\[
G^\sharp=AGA^{-1}.
\]
If $X\in\mathfrak p_P$, it follows that
\[
(GX)^\dagger
=
G^\sharp X^\dagger
=
AGA^{-1}AX
=
AGX.
\]
Hence
\[
G\mathfrak p_P\subset\mathfrak p_P.
\]
Since $G$ is invertible, its restriction to the finite-dimensional real
space $\mathfrak p_P$ is an automorphism.

Finally, the restriction of the trace form $b$ to $\mathfrak p_P$ is
non-degenerate. Indeed, setting
\[
H:=\Ad_{P^{1/2}}X,
\]
the condition $X^\dagger=AX$ gives
\[
H^\dagger=H.
\]
Thus
\[
\mathfrak p_P
=
\Ad_{P^{-1/2}}(\ri\K).
\]
By invariance of the trace,
\[
b(X,Y)
=
b\left(
\Ad_{P^{1/2}}X,
\Ad_{P^{1/2}}Y
\right),
\]
and the restriction of $b$ to $\ri\K$ is definite and, in particular,
non-degenerate. Therefore $b|_{\mathfrak p_P}$ is non-degenerate.

Suppose that $X\in\mathfrak p_P$ belongs to the kernel of the
target-space metric. This means that
\[
b(X,GY)=0
\qquad
\text{for every }Y\in\mathfrak p_P.
\]
Since $G$ maps $\mathfrak p_P$ onto itself, this implies
\[
b(X,Z)=0
\qquad
\text{for every }Z\in\mathfrak p_P.
\]
The non-degeneracy of $b|_{\mathfrak p_P}$ then gives $X=0$.
Therefore the target-space metric is non-degenerate at every point at
which $D$ is invertible. In particular, for
\[
\eta>0,
\qquad
-\eta<\mu\leq0,
\]
the operator $D$ is invertible for every $P\in K^+$.
The metric is consequently non-degenerate everywhere. Since it is
definite at $q=0$, its signature cannot change throughout this connected
parameter range. Hence the background \eqref{eyc++} is everywhere
regular and possesses a definite target-space metric.

\smallskip

 We finish this section by remarking that, up to the overall $\frac1\eta$ normalization,    the action \eqref{eyc++} describes the one-parameter deformation of the hyperbolic WZW model.
Since  this deformation is  one-loop conformal there should be a good chance to quantize the model exactly for each value of the deformation parameter $q$. The probabilistic methods of \cite{GKR25} look particularly promising in this respect.
 
 %%%%%%%%%%%%%%%%%%%%%%%%%%%%%%%%%%%%%%%%%%
\section{Conclusions and outlook}
%%%%%%%%%%%%%%%%%%%%%%%%%%%%%%%%%%%%%%%%%%

In this paper, we developed a systematic framework for the study of what we have called
Euclidean $\E$-models, namely first-order systems on Drinfeld doubles for which the  self-adjoint operator $\E$ squares to $-\one$ rather than to $\one$.
Although this modification is formally simple, it leads to a substantially different geometric and dynamical picture.
Most importantly, the associated second-order $\sigma$-models naturally live on a Euclidean world-sheet and possess real Euclidean actions.

\medskip

Our first goal was to formulate the Euclidean theory in a language as parallel as possible to the standard Lorentzian $\E$-model formalism.
This led to Euclidean counterparts of several basic constructions:
the first-order equations of motion, the passage to the second-order $\sigma$-model, and the corresponding form of Poisson--Lie T-duality.
In particular, for perfect Drinfeld doubles, the dual $\sigma$-models on mutually dual Poisson--Lie groups admit an explicit description entirely analogous to the Lorentzian one, with the differences encoded in a controlled and conceptually transparent way.

\smallskip

An important advantage of the Euclidean $\E$-models framework is that it places on a structural basis various analytic continuations studied previously in the literature \cite{BKL18}, \cite{LS18} which, at the level of individual $\sigma$-model actions, might otherwise appear as  somewhat ad hoc prescriptions.  The Euclidean $\E$-model formalism  permitted us to justify that the analytic continuations of mutually Poisson--Lie dual Lorentzian models remain mutually Poisson--Lie dual on the Euclidean side. 

\smallskip 

Our second goal was to find a natural procedure associating a Euclidean model with a given Lorentzian one. For perfect Drinfeld doubles, this was achieved by the $\E$-Wick rotation, which assigns to every Lorentzian $\E$-model a canonical Euclidean partner with a real Euclidean action. This operation is distinct from the standard Wick rotation of the corresponding second-order action: it acts on the underlying first-order data and transforms both the symmetric and the antisymmetric parts of the target-space background in a nontrivial way.

\smallskip 

The examples studied in this paper provide substantial evidence that the $\E$-Wick rotation is compatible not only with Poisson--Lie duality but also with integrability and renormalization. Within the particular ansatz considered here, Lorentzian Lax representations give rise, by analytic continuation, to Euclidean Lax representations, and the Euclidean beta functions are obtained from their Lorentzian counterparts in the same way. This compatibility is realized explicitly by the Euclidean bi-Yang--Baxter model.

\smallskip 

The scope of this result must nevertheless be stated carefully. We do not at present have a general argument showing that every solution of the sufficient Lorentzian integrability condition automatically produces a solution of its Euclidean counterpart under the $\E$-Wick rotation. Nor have we proved that the beta functions of an arbitrary strictly renormalizable Lorentzian family are necessarily transported into those of the associated Euclidean family. The compatibility with integrability and renormalization has so far been established only within the particular operator ansatz investigated in this work. Thus, although our results strongly support the naturalness of the $\E$-Wick rotation, its general status in this respect remains to be clarified.

 \smallskip 

 The Euclidean bi-Yang--Baxter model also illustrates the usefulness of combining the intrinsic Euclidean formalism with Poisson--Lie duality. We identified a one-parameter family of one-loop RG fixed points for which the compact-target geometry develops singularities on the maximal torus of $K$, whereas the dual model on $K^{\bC}/K$ possesses an everywhere regular background. The latter provides an integrable one-parameter family of one-loop conformal deformations of the non-unitary hyperbolic Wess--Zumino--Witten model. This example shows that a Euclidean theory which appears singular in one Poisson--Lie frame may admit a smooth and geometrically meaningful description in the dual frame.

\medskip

There are several natural directions for further investigation.

\medskip

First, it would be important to enlarge the class of explicit integrable Euclidean $\E$-models. In the present paper, we treated Yang--Baxter and bi-Yang--Baxter type examples, but more general integrable deformations should also admit Euclidean counterparts. In particular, it would be interesting to investigate elliptic integrable models from the perspective of Euclidean $\E$-models. The rational and trigonometric cases already fit naturally into the $\E$-model framework, and one may expect the elliptic case to exhibit a richer Euclidean geometry, possibly accompanied by new forms of Poisson--Lie duality and Lax representations.
 
\smallskip

Second, Euclidean analogues of dressing cosets or degenerate $\E$-models should exist as well. Developing these extensions may substantially enlarge the class of non-unitary $\sigma$-models with real Euclidean actions accessible within the present framework.

\smallskip

Third, it would be desirable to relate our treatment of $\E$-models to the perspective offered by affine Gaudin models and four-dimensional Chern--Simons theory \cite{LV21}. In the latter construction, the integrable models need not be Lorentzian and depend on free parameters $\epsilon_i$ which, in the $\E$-model interpretation, become the eigenvalues of the operator $\E$. In particular, the models are relativistic when $\epsilon_i^2=1$, in agreement with the Lorentzian $\E$-model condition $\E^2=\one$. Similarly, motivated by \cite{CY19}, it was proposed in \cite{DLMV20} that the choice $\epsilon_i^2=-1$ leads to integrable Euclidean models. This is consistent with the Euclidean $\E$-model condition $\E^2=-\one$ and suggests that the construction may produce a large class of new integrable Euclidean $\E$-models. It would be valuable to understand systematically how these models fit into the framework developed here.

\medskip

Finally, the most important long-term direction concerns quantization. Recent progress in the probabilistic construction of non-unitary theories with real Euclidean actions suggests that Euclidean $\E$-models may provide a natural setting in which to reconsider the quantum meaning of Poisson--Lie duality and integrability. It would be particularly interesting to understand whether the classical structures developed here admit quantum counterparts formulated directly at the level of the Euclidean path integral. The integrable one-parameter family of one-loop conformal deformations of the hyperbolic Wess--Zumino--Witten model constructed in this paper provides a concrete testing ground for this question.

 \appendix
 \section{Appendix}

\begin{theorem}
 
 Consider a  Drinfeld double $\D$   and a  linear operator $\E:\D\to \D$   such that  for all $x,y\in \D$ it holds $(\E x,y)_\D=(x,\E y)_\D$ and such that  $\E^2=-\one$. It follows that the signature of the symmetric bilinear form $(x,y)_\E:=(x,\E y)_\D$ is $(n,n)$, that is it  is the same as the signature of $(.,.)_\D$.

\end{theorem}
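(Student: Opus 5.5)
The key observation is that $\E$ is an isometry of the form $(.,.)_\E$ that turns it into its negative. Consider the symmetric bilinear form $(x,y)_\E=(x,\E y)_\D$; its symmetry follows from the self-adjointness of $\E$. It is also non-degenerate: if $(x,\E y)_\D=0$ for all $y$, then, since $\E$ is invertible ($\E^{-1}=-\E$), we get $(x,z)_\D=0$ for all $z$, hence $x=0$ by non-degeneracy of $(.,.)_\D$. Now compute
\[
(\E x,\E y)_\E=(\E x,\E^2 y)_\D=-(\E x,y)_\D=-(x,\E y)_\D=-(x,y)_\E .
\]
Thus $\E$ maps any subspace on which $(.,.)_\E$ is positive definite injectively onto a subspace on which it is negative definite, and conversely.

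Next, let $p$ and $q$ be the numbers of positive and negative eigenvalues of $(.,.)_\E$, so that $p+q=\dim\D=2n$ by non-degeneracy. Choose a maximal positive definite subspace $W_+$, of dimension $p$. Then $\E W_+$ is a negative definite subspace of dimension $p$, so $p\le q$. Applying the same argument to a maximal negative definite subspace gives $q\le p$. Hence $p=q=n$.

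Finally, the form $(.,.)_\D$ itself has signature $(n,n)$, since the Drinfeld double is maximally indefinite (equivalently, it admits a maximally isotropic subspace of dimension $n$). The two signatures therefore coincide. The only step requiring any care is the non-degeneracy of $(.,.)_\E$, which is needed to ensure $p+q=2n$; as shown above, it follows immediately from the invertibility of $\E$.

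An alternative route would be to decompose $\D^\bC$ into the $\pm\ri$ eigenspaces of $\E$, but the isometry argument is shorter and avoids complexification entirely.
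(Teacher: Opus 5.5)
Your proof is correct, and it takes a slightly different, more direct route than the paper.

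The paper works in a basis adapted to a pair of complementary isotropic subspaces, where the Gram matrix of $(.,.)_\D$ is $J$ with $J^2=I$. Writing $S=J\E$ for the Gram matrix of $(.,.)_\E$, it turns $\E^2=-\one$ into $S^{-1}=-JSJ$ and then compares signatures. It uses three facts: $JSJ$ is congruent to $S$, negation swaps the numbers of positive and negative eigenvalues, and $S^{-1}$ has the same signature as $S$.

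You instead observe that $\E$ itself is an anti-isometry of $(.,.)_\E$. In the paper's notation this is the congruence $\E^{\mathsf T}S\E=-S$. It rests on the same identity the paper uses: since $\E^{\mathsf T}=SJ$, one has $\E^{\mathsf T}S\E=S(JSJS)=S\E^2=-S$. You then read off $p\le q$ and $q\le p$ from Sylvester's law by pushing a maximal positive (respectively negative) definite subspace through $\E$.

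Your version is coordinate-free and exhibits the congruence between the form and its negative explicitly. It also makes clear that neither the adapted basis nor the split signature of $(.,.)_\D$ is needed to get $p=q$; the latter enters only in the final comparison of the two signatures. The paper's version is a purely matrix-algebraic computation. It is concrete, but needs the extra standard fact that a nondegenerate symmetric matrix and its inverse have the same signature.
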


\begin{proof} Let $T_1,\dots,T_n,t^1,\dots,t^n$
be a  basis of $\D$ such that 
\[
(T_i,t^j)_\D=\delta_{i}^{\ j}, \qquad (T_i,T_j)_\D=(t^i,t^j)_\D=0.
\]
If, by an abuse of notation, we denote by the same symbols $x,y$  the column $2n$-vectors corresponding to the elements $x,y\in\D$ in the basis $T_i,t^i$, then the bilinear form $(.,.)_\D$ can be expressed as 
\[
(x,y)_\D = x^{\mathsf T} J y,
\]
where the Gram matrix $J$ of $(\cdot,\cdot)_\D$ is
\be 
J=\begin{pmatrix}
0 & I_n\\
I_n & 0
\end{pmatrix}.\label{300}
\ee
In particular, it follows from \eqref{300} that  $(\cdot,\cdot)_\D$ has manifestly the signature $(n,n)$.

\medskip

If, again by an abuse of notation, we denote by the symbol $\E$ the matrix of the
operator $\E$ in the basis $T_i,t^i$, we
find
\[
(x,\E y)_\D = x^{\mathsf T} J\E y,
\]
The Gram matrix of $(\cdot,\E\cdot)_\D$ is therefore
\[
S := J \E.
\]
and the condition $(\E x,y)_\D=(x,\E y)_\D$
 becomes
\[
\E^{\mathsf T}J = J \E.
\]
We compute
\[
S^{\mathsf T} = (J \E)^{\mathsf T} = \E^{\mathsf T} J = J \E = S.
\]
 Since $\E^2=-I$, $\E$ is invertible, hence $S$ is invertible and
$(\cdot,\E\cdot)_\D$ is nondegenerate.

\medskip

Using $\E = J S$ (since $J^2=I$), the condition $\E^2=-I$ becomes
\[
(JS)(JS) = -I,
\]
hence
\[
S^{-1} = - J S J.
\]
 We observe that $J S J$ is congruent to $S$, hence has the same signature as $S$.
Multiplying by $-1$ swaps the numbers of positive and negative eigenvalues.
 On the other hand,  $S$ and $S^{-1}$ have the same signature.
This is only possible if the numbers of positive and negative eigenvalues are equal.
Therefore
\[
\mathrm{sign}(\cdot,\E\cdot)_\D = (n,n).
\]
 
\end{proof}
  
\begin{theorem}
   
{\it Let $\mathcal E:\mathcal D\to\mathcal D$ be a linear operator  such that $\mathcal E^2=\one$ and the bilinear form $g_{\mathcal E}(x,y):=(x,\mathcal E y)_{\mathcal D}$ is symmetric and strictly positive definite.
Assume  that $W\subset\mathcal D$ is an $n$-dimensional subspace which is
 maximally isotropic with respect to $(\cdot,\cdot)_{\mathcal D}$, i.e.
 $(W,W)_{\mathcal D}=0$.
Then it holds
$
\mathcal E(W)\cap W=\{0\}$.}
\end{theorem}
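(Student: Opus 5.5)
The plan is a direct argument by contradiction, built around a cancellation between the two symmetric bilinear forms. Suppose $x\in\E(W)\cap W$. Then $x=\E w$ for some $w\in W$, and also $x\in W$. Since $\E^2=\one$, we have $\E x=w$, so $\E x\in W$ as well. The aim is to show that $x$ must vanish, using only the isotropy of $W$ and the positivity of $g_\E$.

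The key step is to evaluate $g_\E(x,x)$:
\[
g_\E(x,x)=(x,\E x)_\D=(x,w)_\D .
\]
Both $x$ and $w$ lie in $W$, and $(W,W)_\D=0$, so the right-hand side is zero. Strict positive definiteness of $g_\E$ then forces $x=0$, which gives $\E(W)\cap W=\{0\}$.

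I do not expect any real obstacle. The only point to check is that $\E x$ genuinely lies in $W$; this holds because $\E^2=\one$ turns $x=\E w$ into $\E x=w\in W$. Neither the maximality of $W$ nor the symmetry of $g_\E$ beyond positivity is actually needed.

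For later use in the main text, note that the same argument shows more: for $x\in W$, $g_\E(x,x)=(x,\E x)_\D$. Hence if some nonzero $x\in W$ had $\E x\in W$, the form $g_\E$ would vanish on that vector. This is why non-degeneracy of $g_\E$ restricted to $\Ad_m\G$ is automatic in the Lorentzian case, since there $g_\E$ is positive definite on all of $\D$.
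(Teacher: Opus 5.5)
Your proof is correct and is essentially the paper's argument. The paper likewise writes the intersection element as $w=\E v$ with $v\in W$ and shows $g_\E(w,w)=(v,w)_\D=0$; it routes this through $\E^2=\one$ and the symmetry of $\E$, whereas your direct observation $\E x=w$ makes the symmetry unnecessary, as you note.
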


\begin{proof}
Suppose that  $w\in \mathcal E(W)\cap W$.
Then $w=\mathcal E v$ for some $v\in W$. Using symmetry of $\mathcal E$ and the involution property, we compute
\[
g_{\mathcal E}(w,w)
=(w,\mathcal E w)_{\mathcal D}
=(\mathcal E v,\mathcal E^2 v)_{\mathcal D}
=(\mathcal E v,v)_{\mathcal D}
=(v,\mathcal E v)_{\mathcal D}
=(v,w)_{\mathcal D}.
\]
Since both $v$ and $w$ belong to $W$ and $W$ is totally isotropic with respect to
$(\cdot,\cdot)_{\mathcal D}$, we have
\[
(v,w)_{\mathcal D}=0.
\]
Thus
\[
g_{\mathcal E}(w,w)=0,
\]
which contradicts the positive definiteness of $g_{\mathcal E}$ unless $w=0$.
Therefore $\mathcal E(W)\cap W=\{0\}$.
\end{proof}
 
\begin{theorem}

{\it Let $\mathcal E:\mathcal D\to\mathcal D$ be a linear operator  such that $\mathcal E^2=-\one$ and the bilinear form $g_{\mathcal E}(x,y):=(x,\mathcal E y)_{\mathcal D}$ is symmetric (it is then, following Theorem A.1, non-degenerate and has signature $(n,n)$).
Assume that $W\subset\mathcal D$ is an $n$-dimensional maximally isotropic subspace such that the bilinear form $g_{\mathcal E}$ restricted on $W$
is non-degenerate.
Then it holds
$
\mathcal E(W)\cap W=\{0\}$.}
\end{theorem}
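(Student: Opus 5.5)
Mirroring the proof of Theorem A.2, I take $w\in\mathcal E(W)\cap W$ and write $w=\mathcal E v$ with $v\in W$. The difference from the Lorentzian case is that we cannot use positivity of $g_{\mathcal E}$ on $w$ alone. Instead, I will show that $v$ lies in the radical of $g_{\mathcal E}|_W$; the non-degeneracy assumption then forces $v=0$, and hence $w=0$.

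The key computation is the following. For an arbitrary $u\in W$, symmetry of $g_{\mathcal E}$ gives
\[
g_{\mathcal E}(u,v)=(u,\mathcal E v)_{\mathcal D}=(u,w)_{\mathcal D}.
\]
Since $u,w\in W$ and $W$ is totally isotropic for $(\cdot,\cdot)_{\mathcal D}$, this vanishes. Thus $g_{\mathcal E}(u,v)=0$ for all $u\in W$, so $v$ belongs to the kernel of $g_{\mathcal E}$ restricted to $W$. By hypothesis this kernel is trivial, so $v=0$ and therefore $w=\mathcal E v=0$.

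As a variant, the same argument applies with the roles of $v$ and $w$ exchanged. Since $\mathcal E^2=-\one$, we have $v=-\mathcal E w$, so $g_{\mathcal E}(u,w)=(u,\mathcal E w)_{\mathcal D}=-(u,v)_{\mathcal D}=0$ for all $u\in W$, and $w$ itself lies in the radical. Either route suffices; I would present the first, which does not even use $\mathcal E^2=-\one$ beyond the statement's context.

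There is essentially no obstacle. The only point needing care is to notice that one must test $g_{\mathcal E}(\cdot,v)$ against all of $W$, rather than evaluating $g_{\mathcal E}(w,w)$ as in Theorem A.2. The latter only yields $g_{\mathcal E}(w,w)=-(v,w)_{\mathcal D}=0$, which carries no information for an indefinite form. Replacing that single evaluation by the radical argument is exactly where the non-degeneracy hypothesis is used.
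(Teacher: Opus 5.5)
Your proof is correct and is essentially the paper's argument. Both use only the total isotropy of $W$ to show that a nonzero vector of $W$ which $\mathcal E$ maps into $W$ lies in the radical of $g_{\mathcal E}|_W$. The paper routes this through the $\mathcal E$-invariant plane $\mathrm{span}\{u,\mathcal E u\}\subset W$, which is where it invokes $\mathcal E^2=-\one$; you instead note directly that the single vector $v$ already lies in the radical, a slight streamlining that also shows $\mathcal E^2=-\one$ is not needed for this step.
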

\begin{proof} Assume for contradiction that
\[
W\cap \E(W)\neq\{0\}.
\]
Choose a nonzero vector $w\in W\cap \E(W)$. Then there exists $u\in W$ such that $w=\E u$.
Note that $u$ and $\E u$ are linearly independent because $\E$ does not have eigenvectors. Moreover, since $\E^2=-\one$,
the plane
$U:=\operatorname{span}\{u,\E u\}\subset W$ is  
  $\E$-invariant subspace of $W$.

  \medskip

Let us finally show that $U\subset {\rm ker}(g_\E)$.
Indeed, let  $x\in U$ and $y\in W$ be arbitrary.
Using symmetry of $\E$, we compute
\[
g_\E(x,y)=(x,\E y)=(\E x,y).
\]
Since $x\in U$ and $U$ is $\E$-invariant, we have $\E x\in U\subset W$.
Because $W$ is maximally  isotropic for $(\cdot,\cdot)$, it follows that
$$
(\E x,y)=g_\E(x,y)=0,\qquad \forall x\in U,\ \forall y\in W.$$
It follows that every $x\in U$ lies in the kernel of the restriction of the bilinear
form $g_\E$  to the subspace  $W$
Since $U$ is $2$-dimensional and nonzero, this contradicts the assumption that
this restriction  is nondegenerate. Said in other words, our assumption $W\cap \E(W)\neq\{0\}$ must be false, and we conclude that
\[
W\cap \E(W)=\{0\}.
\]
 
 \end{proof}
 \begin{theorem}
Let $\D=\ti\K\oplus \K$ with the canonical bilinear form
\begin{equation}
\big((\ti\alpha,a),(\ti\beta,b)\big)_\D
=(\ti\alpha,b)_\D+(\ti\beta,a)_\D .
\end{equation}
Let $\E_-:\D\to\D$ be a linear operator satisfying
\begin{equation}
\E_-^2=1
\end{equation}
and suppose that the bilinear form
\begin{equation}
B(x,y):=(x,\E_-y)_\D
\end{equation}
is symmetric and strictly positive definite.

Then $\E_-$ necessarily has the form
\begin{equation}
\E_-=
\begin{pmatrix}
-\ti S_-^{-1}\ti A_- & \ti S_-^{-1}\\
S_-^{-1} & -S_-^{-1}A_-
\end{pmatrix},
\end{equation}
where $S_-,\ti S_-$ are invertible symmetric operators and
$A_-,\ti A_-$ are antisymmetric operators.
\end{theorem}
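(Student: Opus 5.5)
The plan is to follow the geometric picture already sketched in Section~\ref{sec:EWick}: work with the eigenspaces of $\E_-$ rather than its matrix entries. Since $\E_-^2=\one$, the space splits as $\D=V_+\oplus V_-$ into the $\pm1$-eigenspaces. For $x_\pm\in V_\pm$ we have $B(x_+,x_-)=(x_+,x_-)_\D$ and also $B(x_+,x_-)=B(x_-,x_+)=-(x_-,x_+)_\D$. Hence $V_+\perp V_-$ with respect to $(\cdot,\cdot)_\D$. On $V_\pm$ the form $(\cdot,\cdot)_\D$ equals $\pm B$, so it is positive (negative) definite there. The signature $(n,n)$ of $(\cdot,\cdot)_\D$ then forces $\dim V_\pm=n$.

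Next I use isotropy. Because $\ti\K$ and $\K$ are isotropic, each meets both $V_+$ and $V_-$ trivially, since a nonzero $x\in V_\pm$ has $(x,x)_\D\neq0$. By dimension count, $V_\pm$ is a graph over $\ti\K$, namely $V_\pm=\{\ti x+T_\pm\ti x\}$ with $T_\pm:\ti\K\to\K$. It is also a graph over $\K$, namely $V_\pm=\{a+\ti T_\pm a\}$ with $\ti T_\pm=T_\pm^{-1}$, so $T_\pm$ is invertible.

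Set $\ti A_-:=\tfrac12(T_++T_-)$ and $\ti S_-:=\tfrac12(T_+-T_-)$. The orthogonality $(\ti x+T_+\ti x,\ti y+T_-\ti y)_\D=0$ reads $(\ti x,T_-\ti y)+(\ti y,T_+\ti x)=0$. Taking the parts symmetric and antisymmetric in $\ti x,\ti y$ gives that $\ti S_-$ is symmetric and $\ti A_-$ is antisymmetric. Positivity on $V_+$ gives $2(\ti x,T_+\ti x)>0$, i.e.\ $(\ti x,\ti S_-\ti x)>0$, so $\ti S_-$ is invertible. The same construction using graphs over $\K$ produces invertible symmetric $S_-$ and antisymmetric $A_-$.

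It remains to read off the matrix. For $\al\in\ti\K$, I decompose $\al$ along $V_\pm$ as $\al=\tfrac12(\al+T_+c)+\ldots$. It is cleaner to solve directly: $\E_-$ acts as $\pm1$ on the vectors $\ti x+(\ti A_-\pm\ti S_-)\ti x$. From $\E_-(\ti x+\ti A_-\ti x\pm\ti S_-\ti x)=\pm(\ti x+\ti A_-\ti x\pm\ti S_-\ti x)$, adding and subtracting gives
\[
\E_-(\ti x+\ti A_-\ti x)=\ti S_-\ti x,\qquad \E_-\ti S_-\ti x=\ti x+\ti A_-\ti x .
\]
So $\E_-a=\ti S_-^{-1}a+\ti A_-\ti S_-^{-1}a$ for $a\in\K$, and $\E_-\ti x=\ti S_-\ti x-\E_-\ti A_-\ti x$. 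This yields the first block column $-\ti S_-^{-1}\ti A_-$ in the $\ti\K$ component and the entry $\ti S_-^{-1}$ in the upper-right block. The lower row, with entries $S_-^{-1}$ and $-S_-^{-1}A_-$, comes from the symmetric computation using graphs over $\K$.

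Finally, the two descriptions are consistent: the lower-left entry $\ti S_--\ti A_-\ti S_-^{-1}\ti A_-$ must equal $S_-^{-1}$, which is the analogue of \eqref{392}--\eqref{394} following from $\E_-^2=\one$. The main obstacle is this bookkeeping: matching conventions and signs between the tilde and non-tilde parametrizations, and checking that the lower-left block is indeed $S_-^{-1}$ rather than a sign-flipped version. I would verify it by comparing with the graph-over-$\K$ description of $V_\pm$ and with \eqref{335b} at the lower sign.
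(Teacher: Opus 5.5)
Your proof is correct, but it does not follow the paper's proof of this theorem. It is instead the eigenspace/graph picture that the paper only sketches in Section~\ref{sec:EWick}.

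The paper's proof works directly with the blocks of $\E_-=\begin{pmatrix}X&Y\\ Z&W\end{pmatrix}$. Restricting $B$ to the isotropic summands gives $B((0,a),(0,a))=(a,Ya)_\D$ and $B((\ti\alpha,0),(\ti\alpha,0))=(\ti\alpha,Z\ti\alpha)_\D$. Hence $Y$ and $Z$ are symmetric and positive definite, and the paper simply sets $\ti S_-:=Y^{-1}$, $S_-:=Z^{-1}$, $\ti A_-:=-Y^{-1}X$ and $A_-:=-Z^{-1}W$. Antisymmetry then follows from $W^t=X$ (symmetry of $B$) together with the block identities $ZX+WZ=0$ and $XY+YW=0$ contained in $\E_-^2=\one$. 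That argument needs no eigenspace decomposition, no dimension or signature count, and no transversality.

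Your version uses positivity more structurally: the eigenspaces are definite, hence transverse to every isotropic subspace. It costs more bookkeeping. In return it yields the graph description $V_\pm=\{\ti x+(\ti A_-\pm\ti S_-)\ti x\}$ used for the $\E$-Wick rotation, and the same computation gives the alternative lower row \eqref{335b}.

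The step you single out as the main obstacle is not actually needed. The theorem only asserts that the four operators exist. You read off the top row $(X,Y)=(-\ti S_-^{-1}\ti A_-,\ti S_-^{-1})$ from the graphs over $\ti\K$, and the bottom row from the graphs over $\K$, both for the same operator $\E_-$.

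The mirror computation uses $V_\pm=\{a+(A_-\pm S_-)a\}$ with $S_-=\tfrac12(T_+^{-1}-T_-^{-1})$ and $A_-=\tfrac12(T_+^{-1}+T_-^{-1})$. It gives $\E_-(a+A_-a)=S_-a$ and $\E_-S_-a=a+A_-a$. Hence $Z=S_-^{-1}$ with a plus sign, not a flipped one, and $W=-S_-^{-1}A_-$. Here $S_-$ is positive definite for the same reason as $\ti S_-$.

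The identity $\ti S_--\ti A_-\ti S_-^{-1}\ti A_-=S_-^{-1}$ then holds automatically, because both sides equal the block $Z$ of the same operator. You can also drop the separate comparison with \eqref{335b}.
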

\begin{proof}

Write $\E_-$ in block form with respect to the decomposition
$\D=\ti\K\oplus\K$:
\begin{equation}
\E_-=
\begin{pmatrix}
X & Y\\
Z & W
\end{pmatrix},
\qquad
X:\ti\K\to\ti\K,\;
Y:\K\to\ti\K,\;
Z:\ti\K\to\K,\;
W:\K\to\K .
\end{equation}

\medskip

\textbf{1. Positivity of $B$ implies that $Y$ and $Z$ are symmetric and invertible.}

For $x=(0,a)\in\D$ we obtain
\begin{equation}
B(x,x)=(x,\E_-x)_\D=(a,Ya)_\D .
\end{equation}
Since $B$ is positive definite, $(a,Ya)_\D>0$ for all $a\neq0$.
For $x=(0,a)$ and $y=(0,b)$ one has
\begin{equation}
B(x,y)=((0,a),\E_-(0,b))_\D=(a,Yb)_\D .
\end{equation}
Hence the restriction of $B$ to the subspace $\K\subset \D$ is the bilinear form
\begin{equation}
g_Y(a,b):=(a,Yb)_\D.
\end{equation}
Since $B$ is symmetric, $g_Y$ is symmetric, therefore
\begin{equation}
(a,Yb)_\D=(b,Ya)_\D=(Ya,b)_\D ,
\end{equation}
which means that $Y$ is symmetric. Since $B$ is strictly  positive definite, for every
$a\neq 0$,
\begin{equation}
(a,Ya)_\D=B((0,a),(0,a))>0 .
\end{equation}
Thus $g_Y$ is also strictly positive definite, hence nondegenerate. If $Ya=0$, then
\begin{equation}
(a,Yb)_\D=(Ya,b)_\D=0
\qquad \forall b\in\K ,
\end{equation}
so in particular $(a,Ya)_\D=0$, which implies $a=0$. Therefore $Y$ is injective,
and in finite dimension it is invertible.

Similarly, for $x=(\ti\alpha,0)$ we obtain
\begin{equation}
B(x,x)=(\ti\alpha,Z\ti\alpha)_\D>0 ,
\end{equation}
which implies that $Z$ is symmetric and invertible.

Define therefore
\begin{equation}
\ti S_-:=Y^{-1},\qquad S_-:=Z^{-1}.
\end{equation}
Then $\ti S_-$ and $S_-$ are invertible and symmetric.

\medskip

\textbf{2. Definition of $A_-$ and $\ti A_-$.}

Define linear operators
\begin{equation}
A_-:=-S_-W=-Z^{-1}W,\qquad
\ti A_-:=-\ti S_-X=-Y^{-1}X .
\end{equation}
Equivalently,
\begin{equation}
W=-S_-^{-1}A_-,\qquad
X=-\ti S_-^{-1}\ti A_- .
\end{equation}

Substituting these expressions into the block form of $\E_-$ gives
\begin{equation}
\E_-=
\begin{pmatrix}
-\ti S_-^{-1}\ti A_- & \ti S_-^{-1}\\
S_-^{-1} & -S_-^{-1}A_-
\end{pmatrix}.
\end{equation}

\textbf{3. Antisymmetry of $A_-$ and $\ti A_-$.}

First we use the symmetry of the bilinear form
\begin{equation}
B(x,y):=(x,\E_-y)_\D .
\end{equation}

For $x=(\ti\alpha,0)$ and $y=(0,a)$ we compute
\begin{equation}
\E_-(0,a)=(Ya,Wa), \qquad
\E_-(\ti\alpha,0)=(X\ti\alpha,Z\ti\alpha).
\end{equation}
Hence
\begin{align}
B(x,y)&=((\ti\alpha,0),(Ya,Wa))_\D=(\ti\alpha,Wa)_\D,\\
B(y,x)&=((0,a),(X\ti\alpha,Z\ti\alpha))_\D=(a,X\ti\alpha)_\D.
\end{align}
Since $B$ is symmetric we obtain
\begin{equation}
(\ti\alpha,Wa)_\D=(a,X\ti\alpha)_\D
\qquad\forall\,a,\ti\alpha,
\end{equation}
hence
\begin{equation}
W^t=X .
\end{equation}

Recall that
\begin{equation}
A_-:=-Z^{-1}W .
\end{equation}
Taking the transpose gives
\begin{equation}
A_-^t=-W^t(Z^{-1})^t= -XZ^{-1}.
\end{equation}

From the relation $\E_-^2=1$ we have the block identity
\begin{equation}
ZX+WZ=0 .
\end{equation}
Multiplying by $Z^{-1}$ on the left and right yields
\begin{equation}
XZ^{-1}+Z^{-1}W=0 .
\end{equation}
Hence
\begin{equation}
XZ^{-1}=-Z^{-1}W,
\end{equation}
or, equivalently
\begin{equation}
A_-^t =-A_- .
\end{equation}
Thus $A_-$ is antisymmetric.
\medskip

The same argument applied to the block relation
\begin{equation}
XY+YW=0
\end{equation}
shows that
\begin{equation}
\ti A_-:=-Y^{-1}X
\end{equation}
satisfies
\begin{equation}
\ti A_-^t=-\ti A_- .
\end{equation}
Hence $\ti A_-$ is antisymmetric as well.

\medskip

Therefore $\E_-$ has the claimed form with
$S_-,\ti S_-$ symmetric and invertible and
$A_-,\ti A_-$ antisymmetric.
\end{proof}

 \noindent {\bf Two ways of writing the Lorentzian  PL dual of the Yang-Baxter model}

 \medskip 
 
We use the parametrization
\[
b=\widetilde m
=
\begin{pmatrix}
e^{\eta\varphi}
&
2\eta e^{-\eta\varphi}(u_1-i u_2)
\\
0&e^{-\eta\varphi}
\end{pmatrix},
\qquad
P=bb^\dagger .
\]

Then
\[
\partial_\pm b\,b^{-1}
=
Y_\pm^1\tau_1+Y_\pm^2\tau_2+Y_\pm^3\tau_3,
\]
where
\[
Y_\pm^1=\partial_\pm u_1-2\eta u_1\partial_\pm\varphi,
\qquad
Y_\pm^2=\partial_\pm u_2-2\eta u_2\partial_\pm\varphi,
\qquad
Y_\pm^3=\partial_\pm\varphi .
\]

The dual Poisson tensor is
\[
\widetilde\Pi(b)=
\begin{pmatrix}
0&\rho&-u_2\\
-\rho&0&u_1\\
u_2&-u_1&0
\end{pmatrix},
\]
with
\[
\rho
=
\eta(u_1^2+u_2^2)
+
\frac{e^{4\eta\varphi}-1}{4\eta}.
\]

Let
\[
B:=\alpha^2+\eta^2,
\qquad
a:=\frac{\alpha}{2B},
\qquad
c:=\frac{1}{2\alpha},
\qquad
d:=\rho+\frac{\eta}{2B}.
\]
Then
\[
S_- - A_-+\widetilde\Pi(b)
=
K
=
\begin{pmatrix}
a&d&-u_2\\
-d&a&u_1\\
u_2&-u_1&c
\end{pmatrix}.
\]

The determinant of \(K\) is
\[
\Delta
=
a^2c+a(u_1^2+u_2^2)+cd^2 .
\]
Its inverse is
\[
K^{-1}
=
\frac1{\Delta}
\begin{pmatrix}
ac+u_1^2&-cd+u_1u_2&au_2+du_1\\
cd+u_1u_2&ac+u_2^2&-au_1+du_2\\
-au_2+du_1&au_1+du_2&a^2+d^2
\end{pmatrix}.
\]

Therefore the Poisson--Lie action is
\[
\widetilde{\mathcal S}_-(b)
=
\frac12\int dt\,d\sigma\;
Y_+^T K^{-1}Y_- .
\]

Equivalently,
\[
\begin{aligned}
\widetilde{\mathcal L}_-
=
\frac{1}{2\Delta}\Big[
&
(ac+u_1^2)Y_+^1Y_-^1
+
(ac+u_2^2)Y_+^2Y_-^2
+
(a^2+d^2)Y_+^3Y_-^3
\\
&+
(-cd+u_1u_2)Y_+^1Y_-^2
+
(cd+u_1u_2)Y_+^2Y_-^1
\\
&+
(au_2+du_1)Y_+^1Y_-^3
+
(-au_2+du_1)Y_+^3Y_-^1
\\
&+
(-au_1+du_2)Y_+^2Y_-^3
+
(au_1+du_2)Y_+^3Y_-^2
\Big].
\end{aligned}
\]

On the other hand, consider the action
\[
\begin{aligned}
S_-(P)
&=
-\frac{i}{8\eta}
\int dt\,d\sigma\;
\operatorname{tr}
\left(
\frac{
\alpha-i\eta+(\alpha+i\eta)\operatorname{Ad}_P
}{
\alpha-i\eta-(\alpha+i\eta)\operatorname{Ad}_P
}
(P^{-1}\partial_+P)(P^{-1}\partial_-P)
\right)
\\
&\quad
-\frac{i}{8\eta}
\int d\sigma\,d^{-1}
\operatorname{tr}
\left(
P^{-1}dP,
[
P^{-1}P',
P^{-1}dP
]
\right).
\end{aligned}
\]

Using
\[
\int d\sigma\,d^{-1}
\operatorname{tr}
\left(
P^{-1}dP,
[
P^{-1}P',
P^{-1}dP
]
\right)
=
-
\int dt\,d\sigma\;
\operatorname{tr}
\Big[
b^{-1}\partial_+b\,
\partial_-b^\dagger(b^\dagger)^{-1}
-
b^{-1}\partial_-b\,
\partial_+b^\dagger(b^\dagger)^{-1}
\Big],
\]
the WZ contribution can be written locally on the \(P=bb^\dagger\) patch.

For the general \(AN\) element \(b\) above, this local primitive is nonzero. Explicitly,
\[
\begin{aligned}
&\operatorname{tr}
\Big[
b^{-1}\partial_+b\,
\partial_-b^\dagger(b^\dagger)^{-1}
-
b^{-1}\partial_-b\,
\partial_+b^\dagger(b^\dagger)^{-1}
\Big]
\\
&\qquad
=
8i\eta^2e^{-4\eta\varphi}
\left(
\partial_+u_1\,\partial_-u_2
-
\partial_-u_1\,\partial_+u_2
\right).
\end{aligned}
\]

Thus the WZ contribution to \(S_-(P)\) is
\[
-\eta e^{-4\eta\varphi}
\left(
\partial_+u_1\,\partial_-u_2
-
\partial_-u_1\,\partial_+u_2
\right).
\]

Substituting
\[
P=bb^\dagger
\]
into the Cayley-transform term and adding the WZ primitive gives
\[
\mathcal L_P
=
\frac12\,Y_+^T K^{-1}Y_- .
\]

Hence
\[
S_-(P)=\widetilde{\mathcal S}_-(b).
\]

\acknowledgments

%\section*{Acknowledgements}

\smallskip

I am indebted to Dan Thompson, Kostas Sfetsos and Kostas Siampos for working out the crucial formula \eqref{fl-} which characterizes the RG behaviour of the Euclidean $\E$-models.

\end{document}